\documentclass{article}
\usepackage[T1]{fontenc}
% T1 fonts will be used to generate the final print and online PDFs,
% so please use T1 fonts in your manuscript whenever possible.
% Other font encondings may result in incorrect characters.
%
\usepackage{graphicx}
% Used for displaying a sample figure. If possible, figure files should
% be included in EPS format.
%
% If you use the hyperref package, please uncomment the following two lines
% to display URLs in blue roman font according to Springer's eBook style:
%\usepackage{color}
%\renewcommand\UrlFont{\color{blue}\rmfamily}
%\urlstyle{rm}
%
% Use the postscript times font!
\usepackage{times}
\usepackage{authblk}
\usepackage{soul}
\usepackage{url}
\usepackage[hidelinks]{hyperref}
\usepackage[utf8]{inputenc}
\usepackage[small]{caption}
\usepackage{graphicx}
\usepackage{amsthm}
\usepackage{amsmath}
\usepackage{amssymb}
\usepackage{booktabs}
\usepackage{algorithm}
\usepackage{algorithmic}
\usepackage[switch]{lineno}

% Comment out this line in the camera-ready submission
%\linenumbers

\urlstyle{same}

% the following package is optional:
%\usepackage{latexsym}

% See https://www.overleaf.com/learn/latex/theorems_and_proofs
% for a nice explanation of how to define new theorems, but keep
% in mind that the amsthm package is already included in this
% template and that you must *not* alter the styling.
%\newtheorem{example}{Example}
%\newtheorem{theorem}{Theorem}

% Following comment is from ijcai97-submit.tex:
% The preparation of these files was supported by Schlumberger Palo Alto
% Research, AT\&T Bell Laboratories, and Morgan Kaufmann Publishers.
% Shirley Jowell, of Morgan Kaufmann Publishers, and Peter F.
% Patel-Schneider, of AT\&T Bell Laboratories collaborated on their
% preparation.

% These instructions can be modified and used in other conferences as long
% as credit to the authors and supporting agencies is retained, this notice
% is not changed, and further modification or reuse is not restricted.
% Neither Shirley Jowell nor Peter F. Patel-Schneider can be listed as
% contacts for providing assistance without their prior permission.

% To use for other conferences, change references to files and the
% conference appropriate and use other authors, contacts, publishers, and
% organizations.
% Also change the deadline and address for returning papers and the length and
% page charge instructions.
% Put where the files are available in the appropriate places.

% PDF Info Is REQUIRED.

% Please leave this \pdfinfo block untouched both for the submission and
% Camera Ready Copy. Do not include Title and Author information in the pdfinfo section

\newtheorem{theorem}{Theorem}
\newtheorem{conjecture}{Conjecture}
\newtheorem{definition}{Definition}
\newtheorem{lemma}{Lemma}
\newtheorem{corollary}{Corollary}
\usepackage{thm-restate}

%
% These are are recommended to typeset listings but not required. See the subsubsection on listing. Remove this block if you don't have listings in your paper.
\usepackage{newfloat}
\usepackage{listings}
\usepackage{diagbox}

\DeclareCaptionStyle{ruled}{labelfont=normalfont,labelsep=colon,strut=off} % DO NOT CHANGE THIS
\lstset{%
	basicstyle={\footnotesize\ttfamily},% footnotesize acceptable for monospace
	numbers=left,numberstyle=\footnotesize,xleftmargin=2em,% show line numbers, remove this entire line if you don't want the numbers.
	aboveskip=0pt,belowskip=0pt,%
	showstringspaces=false,tabsize=2,breaklines=true}
\floatstyle{ruled}
\newfloat{listing}{tb}{lst}{}
\floatname{listing}{Listing}
%
% Keep the \pdfinfo as shown here. There's no need
% for you to add the /Title and /Author tags.

%%%%% John's FUNCTIONS %%%%%

\newcommand{\kSNM}[1]{#1\textsc{-SNM}}
\newcommand{\SNM}{\textsc{SNM}}

\newcommand{\kSNMx}[2]{#1\textsc{-SNM-}#2}

\newcommand{\RSEB}{\textsc{RSEB}}
\newcommand{\RKotH}{\textsc{RKotH}}
\newcommand{\SigOnly}{\textsc{SignificantOnly}}
\newcommand{\RdSEB}[1]{\textsc{R}#1\textsc{SEB}}

%%%%% David's PACKAGES %%%%%
\usepackage[shortlabels]{enumitem}
\usepackage{xcolor}

%%%%% David's MACROS %%%%%%
\newcommand{\Sn}[1]{\mathcal{S}_{#1}}

\newcommand{\pr}[1]{\text{Pr}[{#1}]}
\newcommand{\cpr}[2]{\text{Pr}[{#1} \mid {#2}]}

\setcounter{secnumdepth}{1} %May be changed to 1 or 2 if section numbers are desired.

% The file aaai23.sty is the style file for Press
% proceedings, working notes, and technical reports.
%

% Title

% Your title must be in mixed case, not sentence case.
% That means all verbs (including short verbs like be, is, using,and go),
% nouns, adverbs, adjectives should be capitalized, including both words in hyphenated terms, while
% articles, conjunctions, and prepositions are lower case unless they
% directly follow a colon or long dash
\title{On Approximately Strategy-Proof Tournament Rules for Collusions of Size at Least Three}

% \date{}
\author[1]{David Mik\v{s}an\'{i}k \footnote{dav.miksanik@gmail.com. This research is part of a project that has received funding from the European Union’s Horizon 2020 research and innovation programme under the Marie Skłodowska-Curie grant agreement No. 823748, and while this author was a participant in the DIMACS REU program at Rutgers University, supported by NSF grant CNS-2150186.}}
\author[2]{Ariel Schvartzman \footnote{aschvartzman@google.com}}
\author[1]{Jan Soukup \footnote{soukup@kam.mff.cuni.cz. This research is part of a project that has received funding from the European Union’s Horizon 2020 research and innovation programme under the Marie Skłodowska-Curie grant agreement No. 823748, and while this author was a participant in the DIMACS REU program at Rutgers University, supported by NSF grant CNS-2150186.}}

\affil[1]{Computer Science Institute, Charles University, Prague, Czechia}
\affil[2]{Google Research, Mountain View, California, USA}
%     \email{soukup@kam.mff.cuni.cz, dav.miksanik@gmail.com} \and
%     \\
%     \email{aschvartzman@google.com}  
% }
% REMOVE THIS: bibentry
% This is only needed to show inline citations in the guidelines document. You should not need it and can safely delete it.
% END REMOVE bibentry

\begin{document}

\maketitle

\begin{abstract}
A tournament organizer must select one of $n$ possible teams as the winner of a competition after observing all $\binom{n}{2}$ matches between them. The organizer would like to find a tournament rule that simultaneously satisfies the following desiderata. It must be \emph{Condorcet-consistent} (henceforth, CC), meaning it selects as the winner the unique team that beats all other teams (if one exists). It must also be \emph{strongly non-manipulable} for groups of size $k$ at probability $\alpha$ (henceforth, $\kSNMx{k}{\alpha}$), meaning that no subset of $\leq k$ teams can fix the matches among themselves in order to increase the chances any of it's members being selected by more than $\alpha$. Our contributions are threefold. First, wee consider a natural generalization of the Randomized Single Elimination Bracket rule from~\cite{RSEB} to $d$-ary trees and provide upper bounds to its manipulability. Then, we propose a novel tournament rule that is CC and $\kSNMx{3}{1/2}$, a strict improvement upon the concurrent work of~\cite{deathmatch} who proposed a CC and $\kSNMx{3}{31/60}$ rule. Finally, we initiate the study of reductions among tournament rules. 
\end{abstract} 

\section{Introduction}
\label{sec:intro}

Consider the problem a tournament organizer faces when, after observing all pairwise matches between $n$ teams, they must select one as the winner of the tournament. We model the tournament $T$ as a complete, directed graph on the $n$ teams. A tournament rule $r$ is a (possibly randomized) mapping from the set of tournaments on $n$ teams $\mathcal{T}_n$ to a probability vector in $\Delta^n$, $r: \mathcal{T}_n \rightarrow \Delta^n$. The tournament organizer is thus tasked with designing a tournament rule $r$ and would like the rule to satisfy the following natural properties: 

\begin{enumerate}
    \item \label{prop1} If there is a team who beats all other teams, termed a \emph{Condorcet-winner}, they should be picked as the winners of the tournament with probability $1$. We call such rules \emph{Condorcet-consistent} (or CC). 
    \item \label{prop2} No team should be incentivized to unilaterally throw their own games in order to obtain a better outcome. We call such rules \emph{monotone}.
    \item \label{prop3} No subset of $\leq k$ teams should have incentives to fix the matches among themselves in order to improve the chances of any of its members being selected as the winner of the tournament. We call such rules $k$-\emph{strongly non-manipulable} (or \kSNM{k}). 
\end{enumerate} 

These properties are motivated by real-world sports competitions. It would be \newline unimaginable to violate Property~\ref{prop1} and not award the top prize to an undefeated team. Violations to Properties~\ref{prop2},~\ref{prop3} have been observed in high-stakes competitions such as the Olympic Games and the FIFA World Cup. An infamous scandal in the Women's Doubles Badminton tournament at the 2012 Olympics saw multiple teams purposefully losing the last games of their group stage matches in order to avoid a difficult match-up in the following single-elimination bracket. This clear violation of monotonicity (and sportsmanship) resulted in the disqualification of 4 teams, including many of the likely medalists. A less investigated but equally egregious scandal occurred during the 1982 FIFA World Cup. West Germany and Austria disputed the last match of their group stage with full knowledge of the outcomes of all other games. It is suspected that they colluded in order to produce an outcome that would see both teams advance to the next stage of the tournament, at the expense of Algeria, who unexpectedly defeated West Germany in their opening match. As a result of this possible violation of strong non-manipulability, the last game of every group in every FIFA World Cup since has been played simultaneously. 

Observe that if one wanted to simply satisfy Properties~\ref{prop2},~\ref{prop3}, there are numerous simple rules that do so. For example, picking a winner uniformly at random, picking a fixed team as the winner (i.e., a dictatorship) or picking the winner proportional to the number of wins in the tournament all satisfy Properties~\ref{prop2},~\ref{prop3} but not Property~\ref{prop1}. Similarly, it is easy to satisfy Properties~\ref{prop1},~\ref{prop2}. If there is a Condorcet-winner, pick that team. Otherwise, pick a team uniformly at random. Unfortunately, it is known that Properties~\ref{prop1},~\ref{prop3} are directly at odds with each other: \cite{nonmanipulable_randomized_tournament_selections} showed that there exists no randomized tournament rule that can satisfy both of these properties at the same time, even for $k=2$. One way to overcome this impossibility result is to relax Property~\ref{prop3} as follows:
\begin{enumerate}[resume]
    \item \label{prop3r} No subset of $\leq k$ teams should be able to fix the matches among themselves in order to improve the chances of any of its members being selected as the winner by more than $\alpha$. We call such rules $\kSNMx{k}{\alpha}$.
\end{enumerate}
A growing body of work has asked what is the smallest $\alpha$ for which there exists a rule that satisfies Properties~\ref{prop1},~\ref{prop2} and ~\ref{prop3r} (for some fixed value of $k$). First, the work of~\cite{RSEB} proves that the Random Single-Elimination Bracket (henceforth $\RSEB$) rule is CC and $\kSNMx{2}{1/3}$, and that no other CC rule can do better. Later,~\cite{LPmanipulability} show that a rule termed Randomized King-of-the-Hill (henceforth, \RKotH) matches the performance of $\RSEB$ and satisfies a condition even stronger than CC. These works completely settle the question of finding CC and minimally manipulable (or \emph{optimal}) strategy-proof rules for collusions of size $k = 2$. On the other hand, very little is known about the case when $k > 2$, even $k = 3$.~\cite{RSEB} prove a simple lower bound of $\alpha \geq \frac{k-1}{2k-1}$.~\cite{LPmanipulability} proved that there exists an LP-based rule that is CC and $\kSNMx{k}{2/3}$ \emph{for all} $k$ simultaneously. Unfortunately, this rule is neither explicit nor monotone. More recently, in concurrent work~\cite{deathmatch} gave the first explicit, CC, monotone, $\kSNMx{3}{\alpha}$ rule for $\alpha = 31/60$ (and this value of $\alpha$ is tight for their rule). 

As hinted in the previous paragraph, there are two approaches to proving the existence of CC and approximately strategy-proof tournament rules. One approach takes simple rules (such as \RSEB, \RKotH), in the hopes that they are not too manipulable, and provides tight analysis for them. This does not always work: many simple rules, such as picking the team with the most wins, are extremely manipulable (i.e., have $\alpha = 1-O(1/n)$, see~\cite{RSEB}). The other approach provides rules which are not explicitly implementable. For example, the LP-based rule of~\cite{LPmanipulability} arises from fixing the tournament rule for tournaments with a Condorcet-winner, relaxing the manipulability constraints for tournaments that are close to having a Condorcet-winner and proving that the resulting polytope is non-empty for some value of $\alpha < 1$. Our results make use of both of these approaches and introduce another. 

Our first contribution, inspired by the positive results of~\cite{RSEB}, introduces a natural generalization of $\RSEB$. The $\RSEB$ rule randomly seeds teams on the leaf nodes of a binary tree and recursively labels inner nodes as the winner of the match between its children. The winner of the bracket is the team whose label appears in the root node of the tree. We define the Random $d$-ary Single Elimination Bracket (henceforth $\RdSEB{d}$) rule similarly with one key difference: instead of using binary trees, we use $d$-ary trees. If the sub-tournament induced by an inner node's children has a Condorcet-winner, then the inner node will carry the label of that child. Otherwise, the inner node picks a child uniformly at random to advance~\footnote{This decision is inspired by the observation that tournaments on three teams either have a Condorcet-winner or have three teams that beat each other cyclically.}. We provide an upper bound $\alpha_{d, k} < 1$ on the manipulability of $\RdSEB{d}$ on tournaments with $n$ teams and collusions of size up to $k \leq d$.  

\begin{restatable}{theorem}{thmRdSEB}
\label{thm:RdSEB}
        Let $2 \leq k \leq d$. The  $\RdSEB{d}$ rule is Condorcet-consistent, monotone and $\kSNMx{k}{\alpha_{d, k}}$ for 
        %\[ \alpha_{n, d, k} \leq 1 - \frac{1}{n!} \cdot \Big(\frac{2}{d} \cdot (d)_k \cdot \big( \frac{n}{d} \big)^k \cdot (n-k)! \Big), \]
        
        \begin{equation*}
        \alpha_{d, k} \leq 1 - \Big(\frac{2 \cdot (d)_k }{d^{k+1}} \Big),
        \end{equation*}
        where $(d)_k = \prod_{i=0}^{k-1} (d-i)$ is the falling factorial of $d$ with $k$ terms. 
\end{restatable}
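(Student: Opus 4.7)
The plan is to prove the three conclusions separately. For Condorcet-consistency I use induction on the depth of the bracket tree: if $c$ is a Condorcet winner of $T$, then $c$ is the Condorcet winner of the induced sub-tournament on the leaves of every subtree containing it, so by induction $c$ emerges as that subtree's label and at the root $c$ beats every sibling's label. Monotonicity is similar in spirit: fixing the seeding $\sigma$ and the internal random choices $\omega$, a match flip in favor of team $i$ only alters the label computation at internal nodes where both endpoints of the flipped match appear among the children's labels, and at each such node the local rule (pick the Condorcet winner if one exists, else pick uniformly at random) is itself monotone in $i$'s wins.

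For the non-manipulability bound I couple $T$ and any manipulated tournament $T'$ (agreeing with $T$ outside $S \times S$, where $|S| = k$) through a single seeding $\sigma$ and a single realization $\omega$ of the internal random choices. Let $E$ be the event that the $k$ colluders are placed in $k$ distinct children of the root. Placing the colluders sequentially into the $d$ sub-brackets of size $m = n/d$ gives
\[
\pr{E} \;=\; \prod_{j=1}^{k-1} \frac{(d-j)m}{dm-j} \;\geq\; \prod_{j=1}^{k-1} \frac{d-j}{d} \;=\; \frac{(d)_k}{d^k},
\]
since each inequality reduces to $(d-j)j \geq 0$. Conditional on $E$, no sub-bracket contains more than one colluder, so no colluder-vs-colluder match is ever played below the root; the sub-bracket winners and the whole computation below the root therefore agree under $T$ and $T'$, and only the root decision can differ. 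Let $J$ be the (random) set of root children whose label is a colluder, and let $B$ be the set of $C \in J$ that beat every non-colluding root-child. If $|J| \leq 1$ then no colluder-vs-colluder match is available at the root either and the outcome is identical. If $|J| = j \geq 2$, I split into two cases: when $B = \emptyset$, no colluder can be made a Condorcet winner of the root and any non-colluder Condorcet winner is determined entirely by non-manipulable matches, so the colluder probability is unchanged; when $B \neq \emptyset$, non-colluder Condorcet winners are impossible (any $C \in B$ beats every non-colluder at the root), so under $T$ the colluder probability is either $1$ or exactly $j/d$, while the optimal manipulation makes some $C \in B$ beat every other colluder at the root and become a Condorcet winner, attaining $1$. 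The conditional gain at the root is therefore at most $1 - j/d \leq 1 - 2/d$.

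Combining the two halves,
\[
\sum_{i \in S} r(T')_i - \sum_{i \in S} r(T)_i \;\leq\; \left(1 - \tfrac{2}{d}\right)\pr{E} + (1 - \pr{E}) \;=\; 1 - \tfrac{2}{d}\,\pr{E} \;\leq\; 1 - \frac{2(d)_k}{d^{k+1}},
\]
which is the claimed $\alpha_{d,k}$. The main obstacle I foresee is the root case analysis: it hinges on the structural observation that $B$ and the set of non-colluders beating every colluder cannot simultaneously be nonempty (a $C \in B$ and a non-colluder $Y$ beating all of $S$ would each beat the other, a contradiction), and this dichotomy is what pins down who can possibly be a Condorcet winner under any manipulation and so forces the clean $1 - j/d$ gain bound at the root. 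Everything above that observation is bookkeeping.
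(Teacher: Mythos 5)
Your proposal is correct and follows essentially the same route as the paper's proof: the paper also conditions on the colluders landing in $k$ distinct sub-brackets of the root (phrased as counting $d$-brackets in $\mathcal{B}_N^{+}(S)$ versus $\mathcal{B}_N^{-}(S)$), bounds the gain on that event by $1-2/d$ via the same root-level case analysis (your cases on $J$ and $B$ correspond to its types (i)--(iii)), and charges $1$ to the complementary event. The only difference is presentational -- probabilistic coupling language versus explicit bracket counting -- and your computation of $\Pr[E]$ matches the paper's $|\mathcal{B}_N^{+}(S)|/|\mathcal{B}_N|$ exactly.
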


For $k = d = 3$, we obtain that $\alpha_{3, 3} = .8519$. As a consequence of Theorem~\ref{thm:RdSEB}, we get the first explicit family of CC, monotone rules whose manipulability for any (fixed) $k$ is bounded away from $1$. This stands in contrast to the LP-based rule of~\cite{LPmanipulability} which was neither monotone nor explicit and to several of the rules analyzed in~\cite{RSEB} which had $\alpha \rightarrow 1$ as $n \rightarrow \infty$, even for $k = 2$. In other words, the bound from Theorem~\ref{thm:RdSEB} is independent of $n$, the number of competing teams.  

Our second contribution, inspired more by the second approach to finding approximately optimal tournament rules, is a new explicit tournament rule which strictly improves upon the results of~\cite{deathmatch}. 

\begin{restatable}{theorem}{thmsnm}
\label{thm:main3snm1/2} The $\SigOnly$ rule is Condorcet-consistent, monotone, \newline $\kSNMx{2}{1/3}$ and $\kSNMx{3}{1/2}$, and this is tight.~\footnote{The best lower bound for this problem, due to~\cite{RSEB}, is $2/5$.}
\end{restatable}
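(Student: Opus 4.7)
Condorcet-consistency and monotonicity should be verifiable directly from the definition of \SigOnly{}: condition on the tournament having a Condorcet-winner $w$ and check that the construction deterministically returns $w$; for monotonicity, verify that flipping a match $t' \to t$ into $t \to t'$ only weakly increases $\pr{r(T) = t}$. The $\kSNMx{2}{1/3}$ bound should follow either by inheritance from \RSEB{} or \RKotH{} (if \SigOnly{} reduces to one of these on pairs) or via a direct short coupling argument on the two-team case, which is well understood since \cite{RSEB}.

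The substantive claim is the $\kSNMx{3}{1/2}$ upper bound. The plan is to fix an arbitrary tournament $T$ and a colluding triple $S=\{a,b,c\}$, and to let $p_\sigma := \sum_{t \in S} r(T_\sigma)_t$, where $\sigma$ ranges over the $2^3=8$ orientations of the internal matches of $S$ and all other edges of $T_\sigma$ agree with $T$. The goal is to establish $\max_\sigma p_\sigma - \min_\sigma p_\sigma \leq 1/2$. By the natural symmetry of the rule across $S$, this reduces to comparing a canonical cyclic orientation of $S$ against a canonical transitive one, since all remaining orientations are obtained by relabelling $a, b, c$. I would carry this out by decomposing $p_\sigma = q + \delta_\sigma$ into a $\sigma$-independent ``base'' contribution coming from the branches of the rule whose outputs do not depend on edges inside $S$, and a ``correction'' $\delta_\sigma$ localised to the branches that do, and then bound $|\delta_\sigma - \delta_{\sigma'}|$ pairwise.

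The main obstacle I expect is the case where $S$ itself forms a $3$-cycle and there is at least one external team $x \notin S$ whose pattern of wins against $S$ is asymmetric; this is exactly the configuration that makes the $31/60$ analysis of \cite{deathmatch} tight. Here the restriction to \emph{significant} outcomes (whatever precise form it takes in the definition of \SigOnly{}) should be the key lever: it should ensure that for each fixed external configuration, only a bounded number of teams enter the ``significant pool'' at each branching point, so that the collusion can redirect at most a $1/2$-mass of the rule's randomness. I would look either for an explicit coupling between $r(T_\sigma)$ and $r(T_{\sigma'})$ that disagrees on at most a $1/2$-fraction of the sample space, or for a finite case analysis over the constant-sized set of possible significant configurations, verifying each of the resulting linear inequalities.

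Finally, tightness requires an explicit small tournament together with a colluding triple realizing a $1/2$ manipulation gain. I would search among tournaments on $4$ to $6$ teams, which is typically where extremal examples live in this line of work, and verify the bound by direct computation; such an example then doubles as a witness that no sharper analysis of \SigOnly{} is possible.
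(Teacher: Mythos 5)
Your plan for the central $\kSNMx{3}{1/2}$ bound diverges from how the rule is actually analyzed, and two of its steps would fail. First, the reduction ``by the natural symmetry of the rule across $S$'' to comparing one canonical cyclic orientation against one canonical transitive one is not valid: the eight tournaments $T_\sigma$ differ only inside $S$, but the three teams of $S$ have different out-neighbourhoods \emph{outside} $S$, so relabelling $a,b,c$ does not map one $T_\sigma$ to another, and the extremal pair of orientations can depend arbitrarily on the external structure. Second, the decomposition $p_\sigma = q + \delta_\sigma$ with a $\sigma$-independent base presupposes that some branches of the rule are insensitive to the internal edges of $S$; in fact which branch applies (Condorcet, near-Condorcet with $i$ minimal winning pairs, or far-Condorcet) is itself determined by the in-degrees of teams, which change with $\sigma$, so there is no such base term. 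The paper's proof instead rests on a structural analysis of \emph{minimal winning groups} $MW(i,T)=\{i\}\cup\delta^{-}(i,T)$ (each has a unique leader, any two intersect, and for $k=3$ there are at most $6$ significant teams, at most $3$ of them in MW pairs), and on the key corollary that any set $S$ of at most three teams containing a team beating everyone outside $S$ already satisfies $r_S(T)\ge \frac{1}{2}$ (and $\ge \frac{2}{3}$ if $|S|=2$). The $\kSNMx{3}{1/2}$ bound then follows from a case analysis on which class the \emph{manipulated} tournament $T'$ falls into: either $r_S(T')\le \frac{1}{2}$ outright (each team gets at most $\frac{1}{6}$, or $S$ meets too few significant teams), or $S$ contains a whole MW group of $T'$, in which case the corollary applied to $T$ gives $r_S(T)\ge \frac{1}{2}$ and the gain is capped at $\frac{1}{2}$ because the total mass is $1$. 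This ``the colluders already had $\frac{1}{2}$'' mechanism is the idea your plan is missing; without it, bounding $|\delta_\sigma-\delta_{\sigma'}|$ pairwise has no leverage.

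Two smaller points. The $\kSNMx{2}{1/3}$ bound does not follow by inheritance from \RSEB{} or \RKotH{} --- \SigOnly{} is a different rule, and the paper proves this bound by the same MW-group case analysis, using the stronger $\ge \frac{2}{3}$ guarantee for pairs. Monotonicity is also not a one-line verification: it requires a case split on the probability class of the deviating team. Your plan for tightness is fine in spirit; the paper's witness is a $6$-team tournament (two $3$-cycles wired together so that every team lies in an MW triple and none in an MW pair, hence all get $\frac{1}{6}$, while one triple can manufacture a Condorcet-winner and jump from $\frac{1}{2}$ to $1$).
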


The $\SigOnly$ rule, while explicitly describable, arises from an approach similar to the LP-based rule of~\cite{LPmanipulability}. We identify the tournaments which are close to having a Condorcet-winner as those where teams have more incentives to manipulate outcomes. Given such a close-to-Condorcet tournament $T$, our rule deems a small number of teams as \emph{significant} and distributes most of the probability mass on these teams according to additional properties of the tournament itself (and the rest uniformly across the remaining teams).

Finally, our last contribution introduces a new way of designing Condorcet-con-sistent and \emph{asymptotically} optimal tournament rules. If one had substantial computational power, one could compute a top-cycle consistent,\footnote{Top-cycle consistency is stronger than Condorcet-consistency. We defer its formal definition but informally, the top-cycle is the smallest non-empty set $S$ of teams such that no team in $S$ loses to a team outside of $S$. A top-cycle consistent rule would only pick teams from the top-cycle.} optimal rule for fixed values of $n, k$. How could we use such a rule $r_n$ to construct a rule $r_{n'}$ that works for $n' > n$? First pad the tournament with dummy teams that lose to all real teams until the number of teams $n' := n\cdot M$ is a multiple of $n$. Partition the teams into $n$ groups of equal size. Within each group, pick a team from the top-cycle uniformly at random as a finalist. The number of finalists will be exactly $n$. Finally, run $r_n$ on the $n$ finalists and declare its winner as the overall winner. We prove that this simple idea suffices to transform top-cycle consistent, $\kSNMx{k}{\alpha}$ rules for $n$ teams to CC, $\kSNMx{k}{\alpha'}$ rules for $n' > n$ teams where $\alpha'$ is close to $\alpha$.  

\begin{restatable}{theorem}{thmreduction} \label{thm:inc_num_of_teams_arbitrarily_top-cycle}
If there exists a top-cycle consistent, and $\kSNMx{k}{\alpha}$ rule $r$ for $n$ teams, then there exists a top-cycle consistent and $\kSNMx{k}{\alpha'}$ rule $r'$ for $n' > n$ teams where
\[\alpha' \leq \alpha \Big( 1 - \frac{(k-1)^2}{n} \Big) + \frac{(k-1)^2}{n}.\]
\end{restatable}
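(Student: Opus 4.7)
The plan is to analyze the rule $r'$ described in the paragraph preceding the theorem by decoupling its randomness into (i) a uniformly random partition $\pi$ of the $N = nM$ teams (real plus dummies) into $n$ groups of size $M$, (ii) a uniformly random finalist from each group's top-cycle, and (iii) the internal randomness of $r$. Two ingredients will drive the argument: the dominance lemma for top-cycles (if $U = A \cup B$ and every team in $A$ beats every team in $B$, then the top-cycle of $U$ equals the top-cycle of $A$), and the high-probability event $E$ that all $|S| \leq k$ colluders land in distinct groups of $\pi$.

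To establish top-cycle consistency, let $TC$ denote the top-cycle of the input $n'$-team tournament. The dummies lose to every real team, so they are dominated and the top-cycle of the padded $N$-team tournament remains $TC$. For any group $G$ with $G \cap TC \neq \emptyset$, the teams in $G \cap TC$ dominate $G \setminus TC$ within $G$, hence by the dominance lemma the top-cycle of $G$ lies in $TC$ and the finalist from $G$ is in $TC$. Since $TC \neq \emptyset$, at least one group meets $TC$, so the finalist set $F$ satisfies $F \cap TC \neq \emptyset$; applying the dominance lemma to $F$ places the top-cycle of $F$ inside $TC$, and top-cycle consistency of $r$ then yields $r(F) \in TC$.

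For non-manipulability, fix a colluding set $S$ with $|S| \leq k$ and let $T'$ be the tournament obtained after $S$ fixes its mutual matches. On $E$ every group contains at most one member of $S$, so no within-group match involves two colluders and the joint distribution of $(\pi, F)$ agrees under $T$ and $T'$. Conditioned on $F$, the restricted tournaments $T|_F$ and $T'|_F$ differ only on matches inside $S \cap F$ (with $|S \cap F| \leq k$), so the $\kSNMx{k}{\alpha}$ guarantee of $r$ gives $\pr{r(T'|_F) \in S} - \pr{r(T|_F) \in S} \leq \alpha$ pointwise in $F$. Averaging over $F$ and combining with the trivial bound of $1$ on $E^c$ yields
\[
\pr{r'(T') \in S} - \pr{r'(T) \in S} \;\leq\; \alpha\,\pr{E} + \pr{E^c} \;=\; \alpha + (1-\alpha)\,\pr{E^c}.
\]
A union bound over the $\binom{|S|}{2}$ pairs of colluders together with $\pr{\text{two fixed teams share a group}} = (M-1)/(N-1) \leq 1/n$ gives $\pr{E^c} \leq \binom{k}{2}/n \leq (k-1)^2/n$ (the last inequality using $k \geq 2$), which plugs into the previous display to produce the claimed $\alpha'$. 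The step I expect to require the most care is showing that on $E$ the joint distribution of $(\pi, F)$ is genuinely unaffected by the manipulation, since this invariance is what lets the $\kSNMx{k}{\alpha}$ guarantee of $r$ be applied pointwise in $F$ and then averaged; by comparison the two applications of the dominance lemma for top-cycle consistency are conceptually straightforward.
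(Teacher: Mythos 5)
Your proposal is correct and follows essentially the same route as the paper: condition on the uniformly random partition, observe that when the colluders land in distinct groups the finalist distribution is unchanged and the $\kSNMx{k}{\alpha}$ guarantee of $r$ applies to the induced tournament on the finalists, charge the complementary event at most $1$, and establish top-cycle consistency via the same dominance observation about top-cycles of sub-tournaments that meet $\mathcal{C}(T)$. The only immaterial differences are that the paper pads to a power $n^d$ rather than a multiple $nM$, and bounds the probability that some two colluders share a group by an exact product computation giving $1-\prod_{i=1}^{k-1}(1-i/n)\le (k-1)^2/n$ rather than your union bound; both yield the same $\alpha'$.
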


Theorem~\ref{thm:inc_num_of_teams_arbitrarily_top-cycle} can be thought of as reducing the problem of finding approximately optimal tournament rules for large $n$ to the same problem for small $n$. As an example, if we could verify the existence of a top-cycle consistent $\kSNMx{3}{2/5}$ rule for $n = 25$, then this would imply top-cycle consistent $\kSNMx{3}{\alpha}$ rules for $n \geq 25$ and $\alpha < 1/2$, directly improving on Theorem~\ref{thm:main3snm1/2}.

%We also propose a generalization of the Random Single-Elimination Bracket rule from~\cite{RSEB} to $d$-ary trees. Teams are randomly seeded as the leaf nodes of a $d$-ary tree (with dummy teams added until the number of teams is a power of $d$). Each inner node of the tree is resolved as follows. If there is a Condorcet-winner among its $d$ children, the node takes the label of that child. Otherwise, pick a team uniformly at random. We provide loose bounds for the non-manipulability of this rule. While it also witnesses a value of $\alpha < 1$ for all $n$, it is provably worse than either Randomized Deathmatch or Significants Only.

\subsection{Related Work}
\label{sec:related}

Most of the related work has been mentioned already. There exist two other results that are directly related to our problem. Whereas in this paper (and all previously mentioned papers) we evaluate the manipulability of tournament rules based on their worst-case performance, the work of~\cite{probabilisticdeathmatch} instead studies this question under the lens of average-case analysis. More recently, the work of~\cite{tournamentprizes} expanded the model to include prize vectors $v = (v_1, \dots, v_n)$. Tournament rules with prizes output a complete, linear ranking over the teams where the $i$-th team earns reward $v_i$, rather than giving reward $1$ to the winner and 0 to everyone else.

Another related line of work involves the Tournament Fixing Problem, where the organizer of the tournament is colluding with a team in order to produce a seeding that selects them as the winner (see, e.g.~\cite{Bartholdi92,VuAS09,Williams10,StantonW11,KimW15,KimSW16}). The central questions here are computational (i.e., can the organizer efficiently decide if there is a winning bracket for their favorite team) and structural (i.e., under what conditions does there exist a bracket that selects the organizer's favorite team, see, e.g.~\cite{Maurer80}). Due to its connections with voting theory and social choice, there is a long history of analyzing properties of particular tournament rules (\cite{Fishburn77,Copeland51,Gibbard73,Miller80,Moulin86,Dutta88,Csato17}, to name a few). We refer the reader to the survey by~\cite{surveyWarut} on recent developments in tournaments and computational social choice, or other books on computational social choice (\cite{BCELP16,Laslier1997}).  
\section{Notation} 
\label{sec:notation}

In this section,we introduce key concepts to contextualize our results. Recall a tournament graph $T$ is a complete, directed graph $G = ([n], E)$ on $n$ labelled vertices. We refer to a tournament graph's vertices as teams (and the number of teams as the size of the tournament), its undirected edges as matches and its directed edges as outcomes (where if $(i, j) \in E,$ we say $i$ beats $j$ in $T$). For a fixed team $i$, tournament $T$ we let $\delta^{+}(i, T) = \{ j | j \in [n], (i, j) \in E) \}$ be the set of teams that $i$ beats under $T$, and let $\delta^{-}(i, T) = \{ j | j \in [n], (j, i) \in E) \}$ be the set of teams that $i$ loses to under $T$. Let $\mathcal{T}_n$ be the set of all tournaments on $n$ teams. Recall a tournament rule is a mapping $r: \mathcal{T}_n \rightarrow \Delta^n$. That is, for every tournament $T$, $r(T)$ denotes the distribution over teams according to which the organizer will select a winner. We use notation $r_i(T)$ to denote team $i$'s probability of being selected by $r$ as the winner under tournament $T$. We use the shorthand notation $r_S(T) := \sum_{i \in S} r_i(T)$ to denote the probability that a team in $S$ is selected by $r$ in tournament $T$. The next definitions formalize Properties~\ref{prop1},~\ref{prop2},~\ref{prop3} and ~\ref{prop3r}. 

\begin{definition}
\label{def:condorcet} Team $i$ is the \emph{Condorcet-winner} of tournament $T$ if $i$ beats every other team under $T$. A rule $r$ is \emph{Condorcet-consistent} if $r_i(T) = 1$ when $i$ is $T$'s Condorcet-winner.  
\end{definition}

\begin{definition}
\label{def:monotone} A tournament rule $r$ is \emph{monotone} if for all teams $i$ and all tournaments $T, T'$ where all matches not involving team $i$ are identical and $\delta^{+}(i, T) \supseteq \delta^{+}(i, T')$, it holds that $r_i(T) \geq r_i(T')$.   
\end{definition}

A tournament rule is monotone if it is not in a team's best interest to unilaterally lose matches it would otherwise win. We present the ways in which manipulations are modelled. 

\begin{definition}
\label{def:sAdjT} We say tournaments $T, T'$ are \emph{$S$-adjacent} if the only outcomes where $T, T'$ differ on are those matches that involve two teams in $S$.  
\end{definition}

 If $T, T'$ are $S$-adjacent, outcomes involving at least one team outside of $S$ are identical. Motivated by the results from~\cite{nonmanipulable_randomized_tournament_selections}, the following relaxation was introduced by~\cite{RSEB}. 

%\begin{definition}
%\label{def:ksnm} A tournament rule is \emph{$k$ strongly non-manipulable} (\kSNM{k}) if for all $S \subseteq [n]$ of size at most $k$, for all tournaments $T, T'$ that are $S$-adjacent it holds that $r_S(T) = r_S(T')$.   
%\end{definition}

\begin{definition}
\label{def:ksnmalpha} A tournament rule is \emph{$k$ strongly non-manipulable at probability $\alpha$} \newline ($\kSNMx{k}{\alpha}$) if for all $S \subseteq [n]$ of size at most $k$, for all tournaments $T, T'$ that are $S$-adjacent we have $r_S(T') \leq r_S(T) + \alpha$. For $\alpha = 0$, we simply say the rule is \emph{$k$ strongly non-manipulable} ($\kSNM{k}$).
\end{definition}
\section{Analysis for the \RdSEB{d} Rule}
\label{sec:d-ary}

In this section we study the manipulability of the Randomized $d$-ary Single-Elimination Bracket (\RdSEB{d}) rule, a generalization of the Randomized Single-Elimination Bracket ($\RSEB$) rule from~\cite{RSEB}, against collusions of size $k \leq d$. 

%\begin{definition}
%\label{def:rseb}
%The \emph{Randomized Single-Elimination Bracket} rule operates as follows. Add dummy teams until the number of teams $n = 2^{\lceil \log(n) \rceil}$ is a power of 2. Randomly place teams at the leaf nodes of a complete binary tree of height $\lceil \log(n) \rceil$. Recursively label a parent node with the label of the team that wins the match between its children. The winner of the tournament is the team whose label appears at the tree's root.  
%\end{definition}

%The following will be relevant to the $\RdSEB{d}$ rule. 

\begin{definition}
\label{def:subT} Given a tournament $T$, a \emph{sub-tournament} on $S$ is the sub-graph induced by $T$ on vertex set $S$.  
\end{definition}

We are now ready to formally define the $\RdSEB{d}$ rule. 

\begin{definition}
\label{def:rdseb}
The \emph{Randomized $d$-ary Single-Elimination Bracket} rule operates as follows. Add dummy teams~\footnote{Dummy teams are teams that lose to all non-dummy teams. The outcome of a match between two dummy teams is arbitrary.} until the number of teams $n = d^{\lceil \log_d (n) \rceil}$ is a power of $d$. Randomly place teams at the leaf nodes of a complete $d$-ary tree of height $\lceil \log_d(n) \rceil$. Recursively label a parent node with the label of the Condorcet-winner of the sub-tournament induced by the labels of its children, if there is one. Otherwise, choose one of its children uniformly at random and use that label instead. The winner of the tournament is the team whose label appears at the root of the tree. 
\end{definition}

In terms of Definition~\ref{def:rdseb}, $\RSEB$ is the rule that results from setting $d=2$. The family of \RdSEB{d} rules operates in the same way as the $\RSEB$ rule except that if there is no Condorcet-winner in the sub-tournament induced at a node, the rule advances a team uniformly at random. This choice is motivated by the following simple observation when $d = 3$. There are only two non-isomorphic sub-tournament graphs on three teams: one where there is a Condorcet-winner and one where the teams beat each other cyclically. In the former case, it is obvious which team to advance. In the latter case, we argue choosing a team uniformly at random is reasonable.

%We implement this second step in an equivalent manner to facilitate our analysis. Instead of choosing a team uniformly at random, each node is equipped with a tiebreaker $t \in \{1, \dots, d\}$ uniformly selected and to be used in the case of ties. Thus, a full $d$-ary bracket consists of a random labeling of the leaf nodes with teams as well as a random labeling of the inner nodes with a tiebreaker value. 

\thmRdSEB*

The main idea is that if some (at least two) colluding teams meet only in the final round, then they can increase the joint probability that one of them will be winner by at most $(1 - 2/d)$ (which happens when the colluding teams create a~Condorcet winner). Obviously, if the colluding teams do not meet in a~bracket at all or there exists a Condorcet winner outside the colluding teams, then they cannot increase the chance to win the tournament. In the remaining cases, we simply assume that they can increase the chance to win the tournament by~1.

From Theorem~\ref{thm:RdSEB}, given a fixed $k$, the \RdSEB{$k$} rule is monotone, CC and its manipulability is bounded away from $1$ for all $n$. This is the first explicit family of rules to exhibit this property (since the LP-based rule of~\cite{LPmanipulability} is neither monotone nor explicit). We suspect the bound from Theorem~\ref{thm:RdSEB} is not tight. A finer argument like the one in \cite{RSEB} might yield a better analysis. 
%Additional empirical analysis suggests, however, that these rules are not optimal.

\begin{proof}
% [Proof of Theorem~\ref{thm:RdSEB}]
        %For our analysis, it is more convenient to consider the following equivalent variation of the RS$d$EB rule.
        %
        %When we recursively label a~parent~node~$A$, always at first choose one of its children uniformly at random, say~$B$.
        %
        %Then as before label~$A$ by the Condorcet winner in the~sub-tournament induced by the labels of children of~$A$, if there is one. Otherwise, label~$A$ by the label of~$B$.
        %
        %Moreover, we consider yet another equivalent variation of the RS$d$EB rule.
        %
        %Instead of choosing one of the children of~$A$ uniformly at random, chose a~number~$j$ from $[d]$ uniformly at random, where~$j$ represents that the $j$-th children of~$A$ is chosen.
              It is more convenient to consider the following equivalent variation of the \newline $\RdSEB{d}$ rule.
        %
       % When we recursively label a~parent~node~$A$ of a~complete $d$-ary tree, always at first choose one of its children uniformly at random, say~$B$.
        %
        %Then as before label~$A$ by the Condorcet-winner in the~sub-tournament induced by the labels of children of~$A$, if there is one. Otherwise, label~$A$ by the label of~$B$.
        %
        %Moreover, we consider yet another equivalent variation of the $\RdSEB{d}$ rule.
        %
        Instead of choosing one of the children of~$A$ uniformly at random, chose a~number~$j_A$ from $[d]$ uniformly at random.
        If there is no Condorcet-winner in the sub-tournament, label~$A$ by the label of $j_A$-th child of~$A$.
        In both cases, mark the node~$A$ with~$j_A$ (even if there is a~Condorcet-winner in the sub-tournament).\footnote{Every inner node has a~label and a~mark. Note that they can be equal but they have different meaning.}
        For the sake of the proof, let us denote~$r^d$ as this equivalent variation of the $\RdSEB{d}$ rule.
        Observe that the labels of inner nodes of the complete $d$-ary tree can be deduced from the labels of leaves and marks of inner nodes: run~$r^d$ but all random choices are made accordingly to these labels and marks.

        For any non negative number~$t$, let~$D_t$ be a~reserved set containing~$t$ dummy teams.
        Moreover, let~$T$ be a~tournament on~$N'$ (disjoint from any~$D_t$) of~$n'$ teams, $h := \lceil \log_d(n') \rceil$, and $n := d^h$. 
        We assume that the rule~$r^d$ initially adds~$n-n'$ dummy teams from $D_{n-n'}$  into~$T$.\footnote{Hence every tournament on~$n'$ teams is extended by the same set of dummy teams.}
        Given $N := N' \cup D_{n-n'}$, a~\emph{$d$-bracket} $G(\pi, m)$ for~$N$ is a~complete $d$-ary tree~$G$ of height~$h$ endowed with a~pair $(\pi, m)$, where
        \begin{itemize}
            \item $\pi$ is a~bijection from leaves of~$G$ to~$N$ (i.e., labels of leaves),
            \item $m$ is a~mapping from inner nodes of~$G$ to~$[d]$ (i.e., marks of inner nodes).
        \end{itemize}
        Let~$\mathcal{B}_N$ be the set of all $d$-brackets for~$N$. Now we precisely describe how the labels of nodes of~$G$ can be deduced from~$\pi$ and~$m$.
        Given a~$d$-bracket $G(\pi, m)$ for~$N$, the \emph{outcome} of $G(\pi,m)$ under $T$\footnote{The outcome is well-defined only if the teams in~$T$ are the same as the set~$N$.} is a~labeling~$\omega_T$ of nodes of~$G$ such that $\omega_T(A) = \pi(A)$, for every leaf~$A$, and if~$A$ is a~node with children $A_1, \dots, A_d$, then
        \[ \omega_T(A) := \begin{cases} 
            x & \text{if } x \text{ is the Condorcet-winner in } \\
            & \text{the sub-tournament induced by} \\
              & \{\omega_T(A_1), \dots, \omega_T(A_d)\} \text{ under } T, \\
            \omega_T(A_{m(A)}) & \text{otherwise}.
        \end{cases}
        \] 
        Observe that~$\omega_T$ is a~one-to-one correspondence between the set of all outcomes of $d$-ary brackets on~$N$ under~$T$ and the set of all runs of~$r^d$ on~$T$.

        Fix a~$d$-bracket $G(\pi, m)$ for~$N$.
        The \emph{winner} of $G(\pi, m)$ under~$T$ is $\omega_T(R)$, where~$R$ is the root of~$G$.
        Given a~team $x \in N$, $G(\pi, m)$ is \emph{winning for}~$x$ under~$T$ if~$x$ is the winner of $G(\pi, m)$ under~$T$.
        Denote by $\mathcal{B}_{N,T}(x) \subseteq \mathcal{B}_N$ the set of all winning $d$-brackets for~$x$ under~$T$.
        The motivation behind this reframing of $\RdSEB{d}$ is to simply the argument of the proof. Similar to the original argument of \cite{RSEB}, we will bound the manipulability of $\RdSEB{d}$ by directly counting the number of brackets where colluding teams could gain and compare it to the total number of brackets. We have \[r^d_x(T) = |\mathcal{B}_{N,T}(x)|/|\mathcal{B}_N|.\]
        Observe that $|\mathcal{B}_N| = n! \cdot d^{\ell}$, where~$\ell := \ell(d,h)$ is the number of inner nodes of a~complete $d$-ary tree of height~$h$.

        First, we prove that~$r^d$ is monotone. Take an arbitrary team $x \in N'$.
        It sufficient to show that $r^d_x(T) \geq r^d_x(T')$ for every $\{x,y\}$-adjacent tournaments~$T'$ of~$T$ such that~$x$ beats~$y$ under~$T$.
        Let $G(\pi, m)$ be a~$d$-bracket for $N$. Observe that if $G(\pi, m)$ is winning for~$x$ under~$T'$, then $G(\pi, m)$ is also a~winning for~$x$ under~$T$.
        Hence $\mathcal{B}_{N,T'}(x) \subseteq \mathcal{B}_{N,T}(x)$, and so $r^d_x(T') \leq r^d_x(T)$ as required.

        Second, we prove that~$r^d$ is Condorcet-consistent. Suppose that~$x\in N'$ is the Condorcet-winner in~$T$.
        For every $d$-bracket $G(\pi, \ell)$ for~$N$, consider the unique path~$P$ from the leaf in~$G$ labeled by~$x$ to the root of~$G$.
        Observe that every node of~$P$ is labeled by~$x$ by the outcome of $G(\pi, m)$ under~$T$. In particular, the root of~$G$ is labeled by~$x$.
        It follows that every $d$-bracket $G(\pi, \ell)$ is winning for~$x$ under~$T$. Hence $\mathcal{B}_{N,T}(x) = {B}_{N}$, and so $r^d_x(T) = 1$ as required.

        Lastly, we prove that~$r^d$ is $\kSNMx{k}{\alpha_{d,k}}$ for some~$\alpha_{d,k}$ (to be determined).
        Suppose that $S = \{s_1, s_2, \dots, s_k\} \subseteq N'$ is a~subset of colluding teams. 
        For any $S$-adjacent tournaments~$T'$ of~$T$, we show that $r^d_S(T') - r^d_S(T) \leq \alpha$.
        Recall that $\mathcal{B}_{N,T}(x)$ is the set of all winning $d$-brackets for~$x$ under~$T$. Moreover, define $\mathcal{B}_{N,T}(S) := \bigcup_{s \in S} \mathcal{B}_{N,T}(S)$. In this notation, we can write
        \[ r_S(T') - r_S(T) = \frac{|\mathcal{B}_{N, T'}(S)|}{|\mathcal{B}_{N}|} - \frac{|\mathcal{B}_{N, T}(S)|}{|\mathcal{B}_{N}|}. \]
        
        We upper bound this expression using the idea introduced in~\cite{RSEB}.
        For that, let us denote by $\mathcal{B}_{N}^{+}(S)$ the set of all $d$-brackets~$G(\pi, m)$ for~$N$ such that the least common ancestor of any leaves $A$ and $B$ with $\pi(A), \pi(B) \in S$ is the root of~$B(\pi, m)$.
        In other words, $\mathcal{B}_{N}^{+}(S)$ is the set of all $d$-brackets for~$N$  such that the colluding teams can meet possibly only in the final round. Set $\mathcal{B}_{N}^{-}(S) := \mathcal{B}_{N} \setminus \mathcal{B}_{N}^{+}(S)$.
        Moreover, for a~tournament~$T$, let $\mathcal{B}_{N,T}^{+}(S) := \mathcal{B}_{N}^{+}(S) \cap \mathcal{B}_{N,T}(S)$ and $\mathcal{B}_{N,T}^{-}(S) := \mathcal{B}_{N}^{-}(S) \cap \mathcal{B}_{N,T}(S)$.
        Then 
        \begin{align*}
            r_S(T') - r_S(T) &= \frac{|\mathcal{B}_{N, T'}(S)|}{|\mathcal{B}_{N}|} - \frac{|\mathcal{B}_{N, T}(S)|}{|\mathcal{B}_{N}|} \\
                = \frac{|\mathcal{B}_{N, T'}^{+}(S)| + |\mathcal{B}_{N, T'}^{-} (S)|}{|\mathcal{B}_{N}|} &- \frac{|\mathcal{B}_{N, T}^{+} (S)| + |\mathcal{B}_{N, T}^{-} (S)|}{|\mathcal{B}_{N}|} \\
                = \frac{|\mathcal{B}_{N, T'}^{+} (S)| - |\mathcal{B}_{N, T}^{+} (S)|}{|\mathcal{B}_{N}|} &+ \frac{|\mathcal{B}_{N, T'}^{-} (S)| - |\mathcal{B}_{N, T}^{-} (S)|}{|\mathcal{B}_{N}|} \\
                \leq \frac{|\mathcal{B}_{N, T'}^{+} (S)| - |\mathcal{B}_{N, T}^{+} (S)|}{|\mathcal{B}_{N}|} &+ \frac{|\mathcal{B}_{N}^{-}(S)|}{|\mathcal{B}_{N}|}.
        \end{align*}
     
        We upper bound the first term in the last expression. 
        Let $G := G(\pi,m)$ be a~$d$-bracket for~$N$ in $\mathcal{B}_{N, T'}^{+}(S)$.
        Let $\textbf{x} := (x_1, \dots, x_d)$ be the $d$-tuple of finalists in~$G$ under~$T'$.
        More precisely, let~$R$ be the root of~$G$ with children $A_1, \dots, A_d$. 
        Then $d$-tuple of finalists of~$G$ under~$T$ is $(\omega_{T'}(A_1), \dots, \omega_{T'}(A_d))$.
        The crucial observation is that also~$\textbf{x}$ is $d$-tuple of finalist of~$G$ under~$T$.
        We say that there is a~Condorcet-winner in~$\textbf{x}$ under~$T'$ if there is a~Condorcet-winner in the sub-tournament induced by~$x$ under~$T'$.
        Notice that:
        \begin{enumerate}[(i)]
            \item If $|S \cap \{x_1, x_2, \dots, x_d\}| \leq 1$, then $G \in \mathcal{B}_{N, T}^{+}(S)$.
            \item If $|S \cap \{x_1, x_2, \dots, x_d\}| \geq 2$ and there is no Condorcet-winner in~$\textbf{x}$ under~$T'$, then $G \in \mathcal{B}_{N, T}^{+}(S)$.
            \item If $|S \cap \{x_1, x_2, \dots, x_d\}| \geq 2$ and there is a~Condorcet-winner $s_i \in S$ in~\textbf{x} under~$T'$, then $G \notin \mathcal{B}_{N, T}^{+}(S)$ only if there is no Condorcet-winner $s_j \in S$ in~\textbf{x} under~$T$ and the mark of the root~$R$ of~$G$ is pointing to a~team outside of~$S$ (i.e., $\omega_T(A_{m(R)}) \notin S$).
        \end{enumerate}
        A~$d$-bracket is of type~(i) if it satisfies the statement~(i). Analogously, for types~(ii) and~(iii).
        It follows that, for every $d$-bracket in~$\mathcal{B}_{N, T'}^{+}(S)$ of type~(i) or~(ii), there exists at least one $d$-bracket in~$\mathcal{B}_{N, T}^{+}(S)$ of the same type.
        Moreover, we claim that, for every $d$~$d$-bracket in~$\mathcal{B}_{N, T'}^{+}(S)$ of type~(iii), there exist at least two $d$-brackets in~$\mathcal{B}_{N, T}^{+}(S)$ of type~(iii).
        Indeed, take a~bracket $G(\pi, m) \in B_{N,T'}^{+}(S)$ of type~(iii). Then $G(\pi, m') \in B_{N,T'}^{+}(S)$ is of type~(iii), where only the mark of the root of~$G$ can be changed (there are~$d$ ways how to change it).
        On the other hand, let $i \neq j$ be two indices such that $x_i, x_j \in S$.
        Then $G(\pi, m_1), (\pi, m_2) \in B_{N,T}^{+}(S)$ are of type~(iii), where~$m_i$ and~$m_j$ are $m$ but the mark of the root of~$G$ is changed to~$i$ and~$j$, respectively.
        
        If we denote by~$p$ be the number of $d$-brackets from~$\mathcal{B}_{N, T'}^{+}(S)$ of type~(iii), then
        \begin{align*}
        &|\mathcal{B}_{N, T'}^{+} (S)| - |\mathcal{B}_{N, T}^{+} (S)| \leq p \cdot \Big( 1 - \frac{2}{d} \Big) \\
        &\leq |\mathcal{B}_{N, T'}^{+}(S)| \cdot \Big( 1 - \frac{2}{d} \Big) \leq |\mathcal{B}_{N}^{+}(S)| \cdot \Big(1 - \frac{2}{d} \Big).
        \end{align*}
        Observe that \[ |\mathcal{B}_{N}^{+}(S)| = (d)_k \cdot \Big( \frac{n}{d} \Big)^k \cdot (n-k)! \cdot d^{\ell} \]
        and hence \[ |\mathcal{B}_{N}^{-}(S)| = n! \cdot d^{\ell} - (d)_k \cdot \Big( \frac{n}{d} \Big)^k \cdot (n-k)! \cdot d^{\ell}. \]
        Therefore, 
        \begin{align*}
            r_S(T') - r_S(T) &\leq \frac{|\mathcal{B}_{N, T'}^{+} (S)| - |\mathcal{B}_{N, T}^{+} (S)|}{|\mathcal{B}_{N}|} + \frac{|\mathcal{B}_{N}^{-}(S)|}{|\mathcal{B}_{N}|} \\
                &\leq \frac{\Big( (d)_k \cdot \big( \frac{n}{d} \big)^k \cdot (n-k)! \cdot d^{\ell} \Big) \cdot \Big( 1 - \frac{2}{d} \Big)}{n! \cdot d^{\ell}} \\
                &+ \frac{n! \cdot d^{\ell} - (d)_k \cdot \big( \frac{n}{d} \big)^k \cdot (n-k)! \cdot d^{\ell}}{n! \cdot d^{\ell}} \\
                &= 1 - \frac{1}{n!} \cdot \Big(\frac{2}{d} \cdot (d)_k \cdot \big( \frac{n}{d} \big)^k \cdot (n-k)! \Big) \\
                &= 1 - \frac{n^k \cdot (n-k)}{n!} \cdot \frac{2 \cdot (d)_k}{d^{k+1}} \\
                &\leq 1 - \frac{2 \cdot (d)_k}{d^{k+1}}.
        \end{align*}
    \qed
    \end{proof}

We conjecture that the manipulability of $\RdSEB{d}$ is bounded away from $1/2$ for all $k \leq d$. 

\begin{conjecture}[See Table~\ref{table__RSdEB}]
For all $3\leq k \leq d$, $\alpha_{d,k} \geq 227/420$. 
\end{conjecture}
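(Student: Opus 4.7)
The plan is to establish the conjecture by exhibiting, for every pair $(d, k)$ with $3 \leq k \leq d$, an explicit collusion $S$ of size $k$ and a pair of $S$-adjacent tournaments $(T, T')$ for which $r^d_S(T') - r^d_S(T) \geq 227/420$. The precise value $227/420$ most likely arises from a specific extremal construction, presumably discovered computationally and tabulated in Table~\ref{table__RSdEB}; the proof must either reproduce that construction analytically or reduce the general case to it.

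First, I would verify the bound for small values of $(d, k)$ by exhaustive enumeration. Using the equivalent bracket reformulation $r^d$ from the proof of Theorem~\ref{thm:RdSEB}, the manipulation difference $r^d_S(T') - r^d_S(T)$ can be computed as a rational sum over the finite set $\mathcal{B}_N$ of $d$-brackets, which is tractable for small $n$ (e.g., $n = 9$ or $n = 12$). This search identifies the extremal tournaments, yields exact rational values for $\alpha_{d, k}$, and determines which $(d, k)$ pair realizes the minimum $227/420$. I expect that the minimum is attained asymptotically (as $d \to \infty$ with $k$ fixed, or possibly at $k = d$) since the upper bound $1 - 2(d)_k / d^{k+1}$ from Theorem~\ref{thm:RdSEB} itself decreases in $d$ for fixed $k$.

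Second, to obtain the bound for all larger $(d, k)$, I would look for a canonical construction that scales with $d$ and $k$. A natural candidate is to take the extremal small-case tournament, cluster the $k$ colluding teams into a single sub-bracket of size $d^{\lceil \log_d k \rceil}$, fill that sub-bracket with teams mirroring the small-case configuration, and let the remaining teams of the larger tournament lose to every team inside the sub-bracket (so that with high probability the winner emerges from the distinguished sub-bracket). This concentrates the collusion's influence into a region where the small-case analysis applies directly, and the contribution to $r^d_S$ from such bracket placements can be bounded below using the $(d)_k / d^k$ factor already appearing in the upper bound proof, combined with the base-case gain computed in the first step.

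The main obstacle is the interplay between the random bracket placement and the per-placement manipulation gain. Even if the small-case construction achieves a large gap whenever the colluding teams land together in a single sub-bracket, the probability of this placement decays with $d$ and may erode the overall bound below $227/420$. Resolving this requires either a construction whose per-placement gap grows with $d$ to compensate, or a finer accounting showing that the many placements in which the collusion is spread across several sub-brackets still contribute positively to $r^d_S(T') - r^d_S(T)$ rather than washing it out. I expect this trade-off is exactly what pins down the tight value in the conjecture, so closing the argument likely demands identifying the limiting construction as $d \to \infty$ for each fixed $k$ and bounding its manipulation gap in closed form.
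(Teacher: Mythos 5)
The statement you are trying to prove is labelled a \emph{conjecture} in the paper: the authors offer no proof of it at all, only the numerical table accompanying Theorem~\ref{thm:RdSEB}, and that table records the \emph{upper} bounds $1 - 2(d)_k/d^{k+1}$ on $\alpha_{d,k}$, not lower bounds on the actual manipulability. So there is no proof in the paper to compare yours against, and the honest assessment is that your proposal is not a proof either --- it is a research plan whose two essential steps are both left open. You never exhibit the extremal pair of $S$-adjacent tournaments achieving gain $227/420$ for even a single $(d,k)$, and you never carry out the scaling argument that would transfer such a gain to all larger $(d,k)$. The obstacle you yourself identify --- that the probability of the $k$ colluders landing in a common sub-bracket decays with $d$, potentially eroding any fixed per-placement gain below $227/420$ --- is precisely the crux of the conjecture, and your proposal ends by acknowledging it rather than resolving it. A lower-bound claim of the form ``for all $3 \le k \le d$'' requires a uniform construction or a monotonicity argument in $d$ and $k$; neither is supplied.

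Two smaller points. First, your heuristic that the minimum should be attained asymptotically because the upper bound $1 - 2(d)_k/d^{k+1}$ ``decreases in $d$ for fixed $k$'' is not correct: for fixed $k$ the quantity $2(d)_k/d^{k+1}$ behaves like $2/d$ for large $d$, so the upper bound eventually \emph{increases} toward $1$ (the table itself shows the $k=3$ column turning around at $d=5$). In any case the behaviour of the proven upper bound says nothing about where the true manipulability is minimized. Second, computing $\alpha_{d,k}$ exactly by enumerating all $d$-brackets is only the inner loop; one must also search over all tournaments $T$ and all $S$-adjacent $T'$, and a finite computation for small $n$ does not bound the worst case over all $n$ without an additional argument (e.g., that padding with dummy or always-losing teams cannot decrease the gain). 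As it stands, the proposal identifies the right objects to study but proves nothing; the conjecture remains open both in the paper and in your write-up.
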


    \begin{table}[t]
        \centering
        \begin{tabular}{|c|c|c|c|c|c|c|c|c|}
        \hline
           \diagbox{$d$}{$k$} & 3 & 4 & 5 & 6 & 7 \\ \hline
           3 & 0.8519 & - & - & - & -  \\ \hline
           4 & 0.8125 & 0.9531 & - & - & - \\ \hline
           5 & 0.808  & 0.9232 & 0.9846 & - & - \\ \hline
           6 & 0.8148 & 0.9074 & 0.9691 & 0.9949 & -\\ \hline
           7 & 0.8250 & 0.9 & 0.9572 & 0.9878 & 0.9983 \\ \hline
           % 8 & 0.836  & 0.8975 & 0.9487 & 0.9808 & 0.9952 & 0.9994 \\ \hline
           %9 & 0.8464 & 0.8976 & 0.943 & 0.9747 & 0.9916 & 0.9981 & 0.9998 & - \\ \hline
           %10 & 0.856 & 0.8992 & 0.9395 & 0.9698 & 0.988 & 0.9964 & 0.9993 & 0.9999 \\ \hline
        \end{tabular}
        \vspace{0.3cm}
        \caption{Evaluation of $\alpha_{d,k}$ for small values of~$d$ and~$k$ rounded up to 4 decimals.}
        \label{table__RSdEB}
    \end{table}

%\begin{corollary} 
       % For every $k \geq 3$, there exists $n_0$ such that for every $n \geq n_0$ the rule \RdSEB{k} for~$n$ teams is \kSNMx{k}{$\alpha_{n, d, k}$} for $\alpha_{n, d, k} < 1$.
 %   \end{corollary}

%\begin{proof}
%Observe that as $n \to \infty$, the expression simplifies to     \begin{align}\tag{$\star$}
  %      1 - \frac{2 \cdot (d)_k}{d^{k+1}}.
%    \end{align}
    %
%This quantity is strictly less than~1 for a fixed $k > 2$.
%\end{proof}

\section{Analysis for the \SigOnly Rule}
\label{sec:sigOnly}

In this section we formalize the \SigOnly~rule and outline the proof of Theorem~\ref{thm:main3snm1/2}. The motivation behind the $\SigOnly$ rule is captured by the following simple observation. Suppose we are trying to design a rule which is CC and \kSNMx{$k$}{$\alpha$} for some fixed $k$. Take any tournament graph $T$. If there is a Condorcet-winner, then the rule is fixed and must declare that team the winner. If $T$ is \emph{far} from having a Condorcet-winner, meaning that the smallest set of teams who could collude and produce a Condorcet-winner among them is larger than $k$, then pick a team uniformly at random. If $T$ is only a small number of manipulations away from having a Condorcet-winner, then in order to satisfy the $\SNM$ constraint (approximately) we must allocate all (resp. most) of the probability mass to the teams who could produce a Condorcet-winner. However, we must be careful in order not to incentivize teams in tournaments that are far from having a Condorcet-winner to manipulate into those which are close to having a Condorcet-winner. 

The previous paragraph captures the spirit of the \SigOnly~rule. We first partition the set of all tournament graphs $\mathcal{T}_n$ into three groups: those with a Condorcet-winner (\emph{Condorcet} tournaments), those where (maybe multiple) sets of at most $k$ teams can produce a Condorcet-winner (\emph{near-Condorcet} tournaments) and those where no set of $k$ teams can produce a Condorcet-winner (\emph{far-Condorcet} tournaments). The rule is fixed for the first part of the partition, and will select a winner uniformly at random on the last part of the partition. The middle part of the partition is further partitioned into four categories depending on the exact number and size of the groups that could produce a Condorcet-winner. For each of the parts in this sub-partition, we propose a way to distribute the probability mass. We must balance two different sets of incentives here. On the one side, we must give sufficient probability mass to the groups that can produce a Condorcet-winner. On the other, we can't give them \emph{too} much mass, since otherwise teams in far-Condorcet tournaments might be substantially incentivized into manipulating the tournament into a near-Condorcet one, where some of the colluding members are also in groups that can now produce Condorcet-winners. We formalize the above with the following definitions.

\begin{definition}
\label{def:winninggroup} A tournament $T$ is said to be \emph{near-Condorcet} for $k$ if there is no \newline Condorcet-winner in $T$ but there exists at least one team $i$ with $|\delta^{-}(i, T)| \leq k-1$. Call the set of teams $MW(i, T) := \{i\} \cup \delta^{-}(i, T)$ a \emph{minimal winning group} (MW group). We call team $i$ the \emph{leader} of $MW(i,T)$ and any team in $j \in MW(i,T)$ \emph{significant}. If $|MW(i,T)| = 2, 3$ we call it an MW pair or an MW triple, respectively. 
\end{definition}

We first prove simple structural properties of near-Condorcet tournaments. 

 \begin{lemma}\label{lemma:uniqueleader}
        Every minimal winning group has exactly one leader.
        %Let $T$ be a near-Condorcet tournament for some $k\geq 2$ and $MW(i,T)$ be an MW group in $T$ with leader $i$. Then the only leader of $MW(i,T)$ under $T$ is $i$. %Furthermore, $i$ loses exactly to the teams in $MW(i,T) \setminus \{i\}$. 
    \end{lemma}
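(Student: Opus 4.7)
The plan is to unfold the definitions and derive a contradiction in a single step. By Definition~\ref{def:winninggroup}, a leader of an MW group $S$ is a team $i \in S$ for which $S = \{i\} \cup \delta^{-}(i, T)$; equivalently, $i$'s set of in-neighbors in $T$ is exactly $S \setminus \{i\}$. So being a leader of $S$ amounts to losing to all other members of $S$ (and, implicitly, beating every team outside $S$, though only the first half of this will be used).

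Now I would suppose, for contradiction, that some MW group $S$ admits two distinct leaders $i \neq j$, both in $S$. Applying the characterization to $i$ gives $S \setminus \{i\} = \delta^{-}(i, T)$; since $j \in S \setminus \{i\}$, this forces $j$ to beat $i$ in $T$. By symmetry, applying the characterization to $j$ forces $i$ to beat $j$. But the match between $i$ and $j$ has a single, directed outcome, so we cannot have both $(j,i) \in E$ and $(i,j) \in E$. This contradiction establishes uniqueness, and existence is immediate from the definition of $MW(i,T)$, which names $i$ as a leader. There is no real obstacle here; the main content is simply noticing that each edge has a single orientation, so two ``leaders'' would require a bidirectional edge.
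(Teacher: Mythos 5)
Your proof is correct and follows exactly the same route as the paper's: assume two distinct leaders $i \neq j$ of the same MW group, observe that the definition forces each to lose to (equivalently, be beaten by) the other, and conclude that this contradicts the single orientation of the edge between them. The only difference is that you spell out the unfolding of $\delta^{-}(\cdot, T)$ more explicitly, which the paper compresses into one sentence.
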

    
    \begin{proof}
        Suppose there exists some other leader $j$ of $MW(i,T)$. By definition, $j$ must lose to $i$ and vice versa, a contradiction. %The second part follows immediately from the definition. 
        \qed
    \end{proof}
    
    Notice that even though every MW group has a different unique leader, a leader of one group can be a member in a different group. Moreover, it can happen that one MW group is a subset of another MW group.

    \begin{lemma}\label{lemma:groupintersection}
         Let $T$ be a near-Condorcet tournament, and $MW(i,T)$ and $MW(j, T)$ be distinct MW groups in $T$ with leaders $i$ and $j$, respectively. Then either $i$ is in $MW(j,T)$, or $j$ is in $MW(i,T)$. In particular, $MW(i,T)$ and $MW(j,T)$ must have a non-empty intersection.
    \end{lemma}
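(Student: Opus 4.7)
The plan is to proceed by a direct case analysis on the single match between the two leaders $i$ and $j$. Since $T$ is a tournament, exactly one of the directed edges $(i,j)$ or $(j,i)$ belongs to $T$, so exactly one of the following holds: $i$ beats $j$, or $j$ beats $i$.

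I would then translate each case directly through the definition of the minimal winning group. Recall that $MW(\ell, T) = \{\ell\} \cup \delta^{-}(\ell, T)$, so a team $x$ distinct from $\ell$ belongs to $MW(\ell, T)$ if and only if $x$ beats $\ell$ under $T$. In the first case, where $i$ beats $j$, this means $i \in \delta^{-}(j, T) \subseteq MW(j, T)$, giving the first alternative of the lemma. In the second case, where $j$ beats $i$, it means $j \in \delta^{-}(i, T) \subseteq MW(i, T)$, giving the second alternative.

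For the ``in particular'' conclusion, I would just note that in either case the witnessing team also lies in its own MW group by definition of a leader, so it lies in both groups simultaneously: if $i \in MW(j,T)$, then $i \in MW(i,T) \cap MW(j,T)$, and symmetrically if $j \in MW(i,T)$. Hence the intersection is non-empty.

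The main obstacle is essentially non-existent: the statement is a direct unpacking of Definition~\ref{def:winninggroup} using the tournament axiom that every pair of teams plays exactly one match with a definite winner. The only thing to be careful about is that $i \neq j$ (which follows from the MW groups being distinct together with Lemma~\ref{lemma:uniqueleader}, since each MW group has a unique leader), so that the edge between $i$ and $j$ genuinely exists in $T$ and the case distinction is exhaustive.
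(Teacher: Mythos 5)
Your proposal is correct and is essentially the paper's own argument: both hinge on the single match between the two leaders $i$ and $j$ and the observation that its outcome places the loser's conqueror in the loser's MW group (the paper merely phrases this as a contradiction, assuming both leaders are in or both out of the intersection and deriving that they must beat each other). Your explicit check that $i \neq j$ via distinctness of the groups and Lemma~\ref{lemma:uniqueleader} is a small but welcome addition of rigor.
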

    
    \begin{proof}
    Suppose that both $i$ and $j$ in $MW(i,T) \cap MW(j,T)$, or they are both outside. Then, by the definitions of $MW(i,T)$, $MW(j,T)$, they should beat each other, a contradiction. Thus, exactly one of them must be in the intersection.  
    \qed
    \end{proof}
    
 We now prove the main lemma about the structure of near-Condorcet tournaments for $k=3$, showing that there are not too many significant teams and leaders. 
    
    \begin{lemma}\label{lemma:sizeofpart}
        If $k=3$ and $T$ is a near-Condorcet tournament, then the number of significant teams in $T$ is at most $6$, and the number of teams in the union of MW pairs is at most 3. Furthermore, there cannot be more than 3 MW pairs. If there is exactly one MW pair, then the maximal number of significant teams is $5$, if there are exactly two $MW$ pairs, then the maximal number of significant teams is $4$, and if there are exactly three pairs, then the maximal number of significant teams is $3$.
    \end{lemma}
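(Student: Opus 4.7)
The plan is to prove the lemma by a counting argument on the sub-tournament $T[L]$ induced by $T$ on the set of leaders $L$. Let $p$ and $t$ denote the number of MW pair leaders and MW triple leaders respectively, so $|L| = p + t$. By definition, every pair leader has exactly one loss in $T$ and every triple leader exactly two, so $|\delta^-(i, T) \cap L| \leq |\delta^-(i, T)|$ for every $i \in L$, and summing in-degrees over the tournament $T[L]$ yields the key inequality $\binom{|L|}{2} \leq p + 2t$. Applying the same reasoning to the sub-tournament induced on the pair leaders alone (where each vertex has in-degree at most $1$) gives $\binom{p}{2} \leq p$, and therefore $p \leq 3$, which establishes the bound on the number of MW pairs.

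For the union of MW pairs, the cases $p \in \{0,1\}$ are immediate. When $p = 2$, Lemma~\ref{lemma:groupintersection} forces the two MW pairs to share a team, so their union has size at most $3$. When $p = 3$, equality in $\binom{3}{2} \leq 3$ forces every pair leader's unique loss to be another pair leader, so the three pair leaders form a $3$-cycle and the union of all three MW pairs equals this set of size $3$. To count significant teams in general, write $S_T = L \cup N$ where $N := \bigcup_{i \in L}(\delta^-(i,T) \setminus L)$ collects the non-leader significant teams, and observe
\[ |N| \leq \sum_{i \in L}\bigl(|\delta^-(i,T)| - |\delta^-(i,T) \cap L|\bigr) = (p + 2t) - \binom{|L|}{2}, \]
which combines with $|L| = p + t$ to give $|S_T| \leq 2p + 3t - \binom{p+t}{2}$.

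Together with the feasibility constraint $\binom{p+t}{2} \leq p + 2t$, a finite case analysis over the admissible pairs $(p,t)$ then delivers all claimed bounds: at most $6$ significant teams overall (attained at $(p,t) \in \{(0,3),(0,4)\}$), at most $5$ when $p = 1$, at most $4$ when $p = 2$, and at most $3$ when $p = 3$. I expect the main obstacle to be the $p = 3$ union-of-pairs step, where Lemmas~\ref{lemma:uniqueleader} and~\ref{lemma:groupintersection} must be combined with the equality condition $\binom{3}{2} = 3$ to force the cyclic structure among the three pair leaders; once this is in place, the remaining case analysis reduces to elementary arithmetic in $p$ and $t$.
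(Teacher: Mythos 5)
Your proposal is correct and follows essentially the same argument as the paper: the same double-counting inequality $\binom{p+t}{2}\le p+2t$ over the set of leaders, the same bound $(p+t)+(p+2t)-\binom{p+t}{2}$ on the number of significant teams, and the same finite case analysis over admissible $(p,t)$. The only (harmless) deviations are cosmetic: you extract $p\le 3$ from $\binom{p}{2}\le p$ on the pair leaders alone, and you establish the union-of-MW-pairs bound directly via the forced $3$-cycle when $p=3$, where the paper deduces both facts from the combined inequality.
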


\begin{proof}
        Assume there are $p$ MW pairs and $t$ MW triples in $T$. By Lemmas \ref{lemma:uniqueleader} and \ref{lemma:groupintersection}, we know that each such group has a unique leader, and these leaders are pairwise distinct. Furthermore, each leader of an MW pair loses exactly one match, and each leader of an MW triple loses exactly two matches. Therefore, all leaders together lose exactly $p+2t$ matches. On the other hand, there are exactly $\binom{p+t}{2}$ matches between leaders, and in each of these matches, one leader loses. Hence,
        \[\binom{p+t}{2} \le p+2t.\]
        
        We can equivalently rewrite is as $p^2+t^2+2pt-3p-5t \le 0$. If $p\ge4$ we get that $4+3t+t^2\le 0$, which does not have a solution for a non-negative integer $t$. Hence, there are either 3, 2, 1, or no MW pairs. Furthermore, significant teams are either leaders of some groups or they beat some leaders. Hence, there are at most the number of leaders plus the number of matches lost by leaders to some non-leaders many of them. In other words, there at most $(p+t) + p + 2t - \binom{p+t}{2}$ significant teams. If there are three pairs, the maximum of this expression for non-negative integer $t$ is 3. Similarly, if $p=2$ the maximum is 4, if $p=1$ the maximum is 5, and if $p=0$ the maximum is 6. Moreover, there cannot be more than 3 teams in the union of all MW pairs. Otherwise, there would be at least 3 MW pairs (there cannot be only two because they intersect), but we already showed that if we have three or more MW pairs, there can be at most 3 teams in the union of all MW groups. %For an example of a graph with 6 significant teams, see Figure \ref{fig:chp01:MWgroups}.

        %\begin{figure}
        %    \centering
        %    \includegraphics{}
       %     \caption{Example of a tournament with three MW triples $X, Y, Z$ with leaders $x,y,z$, and six significant teams. Only some results of matches are shown. Leaders $x,y,z$ win all their remaining matches, and the results of all other matches can be arbitrary.}
       %     \label{fig:chp01:MWgroups}
        %\end{figure}
        
    \qed
    \end{proof}
    
    This lemma implies that in every near-Condorcet tournament at most 6 teams can be directly part of some manipulating group creating a Condorcet-winner. Hence, we can directly design a rule that is $3$-SNM-$\frac{2}{3}$. It is sufficient to assign probability 1 to Condorcet-winners, probability $\frac{1}{6}$ to significant teams in near-Condorcet tournaments, and distribute the remaining probabilities in all tournaments uniformly between the remaining teams. We will not prove this formally since we design a better rule, but we believe this intuition is useful in understanding our construction. We now formalize the partition of tournament graphs $\mathcal{T}_n$.  
	\begin{itemize}
	    \item Let $\mathcal{CC}_n \subset \mathcal{T}_n$ be the set of tournaments with a Condorcet-winner.
	    \item Let $\mathcal{FC}_n \subset \mathcal{T}_n$ be the set of far-Condorcet tournaments. 
	    \item Let $i\mathcal{-NCP}_n \subset \mathcal{T}_n$ be the set of near-Condorcet tournaments with exactly $i$ MW pairs for $i=0,1,2,3$. 
%   \item Let $2\mathcal{-NCP}_n \subset \mathcal{T}_n$ be the set of near-Condorcet tournaments containing exactly 2 MW pairs. 
%    \item Let $1\mathcal{-NCP}_n \subset \mathcal{T}_n$ be the set of near-Condorcet tournaments containing exactly 1 MW pair. 
%   \item Let $0\mathcal{-NCP}_n \subset \mathcal{T}_n$ be the set of near-Condorcet tournaments containing no MW pairs. 	    
	\end{itemize}
	
    Now we are ready to define our main rule.
	
	\begin{definition}[\SigOnly~Tournament Rule]
	    For $n \geq 6$ teams, the \SigOnly~\emph{tournament rule} does the following. 
	    
	    \begin{enumerate}
        
	        \item If $T \in \mathcal{CC}_n$ the Condorcet-winner gets 1, and the remaining teams get zero.
	        
	        \item If $T \in \mathcal{FC}_n $, pick a winner uniformly at random.
         
	        \item If $T \in 3\mathcal{-NCP}_n \cup 2\mathcal{-NCP}_n$, teams in MW pairs get \(\frac{1}{3}\), and the remaining teams get zero.
         \setlength{\itemsep}{0.2ex}
         
	    %    \item If $T \in 2\mathcal{-NCP}_n$, teams in MW pairs get \(\frac{1}{3}\), and the remaining teams get zero.
	      \setlength{\itemsep}{0ex}  
	        \item If $T \in 1\mathcal{-NCP}_n$, teams in the only MW pair get $\frac{1}{3}$, teams in MW triples that are not in any MW pair get $\frac{1}{9}$, and the remaining teams get the remaining probability mass evenly distributed between them.
	        \item If $T \in 0\mathcal{-NCP}_n$, teams in MW triples get $\frac{1}{6}$, and the reaming teams get the remaining probability mass evenly distributed between them.
	    \end{enumerate}
	\end{definition}
	
	We restate the main result of this section.
 
    \thmsnm*
	
	The fact that $\SigOnly$~is Condorcet-consistent  follows directly from the definition. We defer the rest of this section, including the proof of Theorem~\ref{thm:main3snm1/2} as well as the instance that witnesses its tightness to Appendix~\ref{app:3snm}. 

 %The rest of this section will outline the proof of Theorem~\ref{thm:main3snm1/2}. Any missing proofs are deferred to Appendix~\ref{app:3snm}. We first show that indeed \SigOnly~is a proper tournament rule. 
	
\section{Almost Optimal Rules for Large $n$ from Optimal Rules for Small $n$}
\label{sec:reduction}

For fixed values of $n, k$, \cite{RSEB} present an LP that can compute the minimally manipulable, Condorcet-consistent and monotone rule. Unfortunately, solving this LP is highly intractable as the number of variables and constraints grows exponentially in the number of teams. For small values of $n, k$, and with sufficient computational power, one could compute such solutions, in the hope that one could use that rule in order to construct approximately optimal ones for larger $n$. Consider the following simple procedure to scale a $\kSNMx{k}{\alpha}$ rule $r_n$ for $n$ teams to a $\kSNMx{k}{\alpha'}$ rule $r_{n'}$ for $n' > n$. First, increase $n'$ to the nearest multiple of $n$, $n': = nM$ of $n$. Partition the teams into $n$ groups of equal size. Within each group, compute the top-cycle on the induced sub-tournament and select a team from it uniformly at random as a finalist. This will reduce the number of teams to exactly $n$ finalists. Then run $r_n$ on the $n$ finalists and declare that rule's winner as the overall winner. The result of this section, restated below, bounds the value of $\alpha'$ for $r_{n'}$. 

\thmreduction*

We now outline the proof of Theorem~\ref{thm:inc_num_of_teams_arbitrarily_top-cycle} and defer its proof to Appendix~\ref{app:reduction}. 
    It is easy to show that $r'$ is top-cycle consistent.
    The derivation of the upper bound on~$\alpha'$ is more complicated but it is based on a~fairly simple idea. Let~$S$ be a~set of colluding teams in a~tournament~$T$ on a set of $n'$ teams.
    If a~permutation~$\pi$ on a~set~of $n'$ teams is chosen uniformly at random, then either every group in the partition contains at most one team from~$S$ or there exists a group containing at least two teams from~$S$.
    In the former case, we use the fact that fixing matches inside~$S$ does not change a team's chances of surviving the group. Then we use the assumption that~$r$ is $k$-SNM-$\alpha$ to conclude that~$S$ can increase the probability to win the tournament~$T$ by at most~$\alpha$ by fixing matches inside~$S$. 
    In the latter case, we simply assume that they can increase the probability by~1.
    The latter case happens with probability $(k-1)^2/n$, which finishes the proof. An explicit consequence of this result is the following. If there exists a top-cycle consistent \kSNMx{3}{2/5} rule for $n = 25$ teams (which would be the best possible as per \cite{RSEB}), then there exists a \kSNMx{3}{$\alpha$} rule for all $n \geq 25$ for some $\alpha < 1/2$. 

\section{Conclusion and Future Directions}
\label{sec:conclusion}

This paper extends our knowledge of non-manipulable tournament rules in several ways. First, we generalize \RSEB~ from~\cite{RSEB} into $\RdSEB{d}$. This rule, at every node, picks a Condorcet-winner if one exists among its children and otherwise chooses a team uniformly at random. We show that for $k \leq d$ this rule is $\kSNMx{k}{\alpha_{d,k}}$ for some $\alpha_{d, k}$ bounded away from $1$, providing the first explicit family of rules that are bounded away from $1$ for any $n$. We suspect that a more careful analysis for small values of $d, k$ might yield better rules but conjecture, however, that $\alpha_{d, k} > 1/2$ for all $k \leq d$. 

Then we present a novel rule, the \SigOnly Rule, which is Condorcet-consistent, monotone, $\kSNMx{3}{1/2}$ (improving on concurrent work of~\cite{deathmatch}) and \newline $\kSNMx{2}{1/3}$ (which is best possible). The rule identifies a small set of teams as significant, awards them substantial probability mass and distributes it uniformly among non-significant teams. The motivation for the rule is that tournaments with a Condorcet-winner and tournaments which are far from having a Condorcet-winner are easy to resolve. We find a way to resolve the intermediate tournaments in a way to avoid substantial gains from manipulation.

Finally we propose a way of reducing the problem of finding good rules for large values of $n$ to the problem of finding good rules for small values of $n$. We show that given a rule for a small number of teams, there is a simple way to extend it to more teams while keeping the manipulability guarantees relatively close. Our result implies that if there exists a $\kSNMx{3}{2/5}$ just for $n = 25$, then there exist $\kSNMx{3}{\alpha}$ rules for $n \geq 25$ and $\alpha < 1/2$, which would directly improve on the state of the art results. 

Overall, our results introduce new ways to overcome obstacles in designing and analyzing approximately optimal tournament rules for collusions of size 3 or more.

\bibliographystyle{acm}
\bibliography{references.bib}
\newpage

\appendix
\section{Missing Proofs from Section~\ref{sec:sigOnly}}
\label{app:3snm}

In this Appendix, we include the missing proofs from Section \ref{sec:sigOnly}. We first start with some simple properties about the $\SigOnly$ rule. 

	\begin{lemma}\label{lem:properlydefined}
	    The \SigOnly~rule is a properly defined tournament rule. 
	\end{lemma}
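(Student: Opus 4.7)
The plan is to verify two things: (a) the five cases in the definition of $\SigOnly$ form a genuine partition of $\mathcal{T}_n$, so that exactly one case applies to every tournament, and (b) in each case the vector of probabilities assigned to the $n$ teams is nonnegative and sums to $1$, i.e., lies in $\Delta^n$.

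For part (a), I would argue as follows. By Definition~\ref{def:condorcet} and Definition~\ref{def:winninggroup}, the sets $\mathcal{CC}_n$ and the family of near-Condorcet tournaments for $k=3$ are disjoint (the former has a Condorcet-winner, the latter does not, but has a team with at most $k-1=2$ losses), and $\mathcal{FC}_n$ is precisely the tournaments that are neither Condorcet nor near-Condorcet. Then I would invoke the final claim of Lemma~\ref{lemma:sizeofpart}, that a near-Condorcet tournament contains at most $3$ MW pairs, to conclude that the sets $i\mathcal{-NCP}_n$ for $i\in\{0,1,2,3\}$ partition the near-Condorcet tournaments. Combined, $\{\mathcal{CC}_n,\mathcal{FC}_n,0\mathcal{-NCP}_n,1\mathcal{-NCP}_n,2\mathcal{-NCP}_n,3\mathcal{-NCP}_n\}$ is a partition of $\mathcal{T}_n$.

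For part (b), the Condorcet and far-Condorcet cases are immediate. For the remaining cases I would use Lemma~\ref{lemma:sizeofpart} to count exactly how many teams each fixed probability value is assigned to. In $3\mathcal{-NCP}_n$ the union of MW pairs has at most (hence, by Lemma~\ref{lemma:groupintersection} combined with the count of pairs, exactly) $3$ teams, each receiving $1/3$, summing to $1$. In $2\mathcal{-NCP}_n$, two distinct MW pairs must intersect in exactly one team (Lemma~\ref{lemma:groupintersection} plus the fact that two distinct MW pairs cannot coincide), so their union has $3$ teams, each receiving $1/3$. For $1\mathcal{-NCP}_n$ I would note the MW pair contributes mass $2\cdot \tfrac{1}{3}=\tfrac{2}{3}$, and by Lemma~\ref{lemma:sizeofpart} the number of significant teams is at most $5$, so there are at most $3$ teams in MW triples but outside the MW pair, each receiving $1/9$; thus significant teams receive at most $\tfrac{2}{3}+\tfrac{3}{9}=1$, leaving a nonnegative mass of at most $\tfrac{1}{3}$ to distribute uniformly over the nonsignificant teams (which exist since $n\geq 6 > 5$). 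For $0\mathcal{-NCP}_n$, Lemma~\ref{lemma:sizeofpart} gives at most $6$ significant teams, each receiving $1/6$, contributing at most $1$, and the residual mass is split evenly over the remaining teams (a well-defined operation as $n\geq 6$, with the set of remaining teams being empty precisely when all significant teams exhaust the mass).

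The main (minor) obstacle is the bookkeeping in the $1\mathcal{-NCP}_n$ and $0\mathcal{-NCP}_n$ cases: one must check that the sum of fixed allocations never exceeds $1$ (so that the uniform residual is nonnegative) and that whenever it equals $1$ the ``remaining teams'' carry zero mass without division-by-zero issues. Both follow from Lemma~\ref{lemma:sizeofpart} and the hypothesis $n\geq 6$. No other calculation is needed, and together these observations yield $\sum_{i\in[n]} \SigOnly_i(T) = 1$ with all entries in $[0,1]$ for every $T\in\mathcal{T}_n$.
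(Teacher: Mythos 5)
Your proposal is correct and follows essentially the same route as the paper's proof: both reduce the claim to checking that the fixed allocations sum to at most $1$ in each class of the partition, using Lemma~\ref{lemma:groupintersection} and the counts from Lemma~\ref{lemma:sizeofpart} (the paper works out the $3\mathcal{-NCP}_n$ case and states that the others follow similarly, which is exactly the bookkeeping you carry out). Your extra verification that the six classes genuinely partition $\mathcal{T}_n$ and that no division-by-zero occurs when distributing the residual mass is left implicit in the paper but does not constitute a different argument.
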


\begin{proof}
	    $\SigOnly$ assigns probabilities to all teams in all tournaments. These probabilities are non-negative. Hence, we just need to check that they sum up to 1 in every tournament.
	    For tournaments in $\mathcal{CC}_n, \mathcal{FC}_n$, it is obvious. For tournaments $T$ in $3\mathcal{-NCC}_n$, there are three MW pairs in $T$. By Lemma \ref{lemma:groupintersection}, they are not the same, so there are at least three teams in the union of these three pairs. By Lemma \ref{lemma:sizeofpart}, we know that there are no more than three teams in the union of these three pairs. Hence, exactly three get probability $\frac{1}{3}$, and the rest get $0$'s. For tournaments in remaining cases, the lemma follows similarly from Lemma \ref{lemma:sizeofpart} and Lemma \ref{lemma:groupintersection}.
	\qed \end{proof}
	
	We now show that for near-Condorcet tournaments, significant teams have a good chance of winning.

	\begin{lemma}\label{lemma:NWgroup}
	    Let $T$ be a near-Condorcet tournament and $r$ be the \SigOnly \newline rule. Let $S$ be an MW group in $T$. Then $r_S(T) \ge \frac{1}{2}$. Additionally, if $|S| = 2$, then $r_S(T) \ge \frac{2}{3}$.  
	\end{lemma}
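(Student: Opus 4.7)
The plan is to proceed by cases according to which part of the partition $\mathcal{T}_n = \mathcal{CC}_n \sqcup \mathcal{FC}_n \sqcup \bigsqcup_{i=0}^{3} i\mathcal{-NCP}_n$ the tournament $T$ belongs to. Since $T$ is near-Condorcet, only $T \in i\mathcal{-NCP}_n$ for $i \in \{0,1,2,3\}$ needs to be considered. In each subcase, I would use the structural results of Lemmas~\ref{lemma:uniqueleader}, \ref{lemma:groupintersection}, and~\ref{lemma:sizeofpart} to determine how the teams of $S$ distribute across the probability tiers of \SigOnly\ (MW-pair team, MW-triple team not in any MW pair, or neither), and then compute $r_S(T)$ directly from the definition.

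The $|S|=2$ half of the statement is uniform across the three subcases in which MW pairs exist: both teams of an MW pair receive probability $1/3$, so $r_S(T) = 2/3$. The case $|S|=2$ does not arise when $T \in 0\mathcal{-NCP}_n$.

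The substantive work is with MW triples ($|S| = 3$). For $T \in 0\mathcal{-NCP}_n$ every team of $S$ receives $1/6$, summing to exactly $1/2$. For $T \in 3\mathcal{-NCP}_n$, Lemmas~\ref{lemma:groupintersection} and~\ref{lemma:sizeofpart} force the three MW-pair leaders to form a 3-cycle while every other team loses all three matches to these leaders; consequently no team loses exactly twice, so MW triples do not exist and the subcase is vacuous. For $T \in 2\mathcal{-NCP}_n$, Lemma~\ref{lemma:sizeofpart} gives at most $4$ significant teams of which exactly $3$ lie in the union of MW pairs, forcing $|S \cap (\text{MW-pair teams})| \geq 2$ and hence $r_S(T) \geq 2/3$.

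The main obstacle is the case $T \in 1\mathcal{-NCP}_n$, where the three probability tiers are $1/3$, $1/9$, and a small residual. The key structural claim to prove is that every MW triple $S$ must share at least one team with the unique MW pair $\{a, x_a\}$: otherwise the leader $c$ of $S$ (whose losses are exactly to the two other teams of $S$) would beat every team outside $S$, including $a$; but by definition of the MW pair, $a$'s sole loss is to $x_a$, so $c = x_a \in S$, contradicting the assumption. Given this intersection, $S$ has at least one team of probability $1/3$ and at most two teams of probability $1/9$, yielding $r_S(T) \geq 1/3 + 2 \cdot 1/9 = 5/9 \geq 1/2$. Assembling the four subcases completes the proof.
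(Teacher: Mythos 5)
Your proposal is correct and follows essentially the same case analysis as the paper's proof (splitting by the number of MW pairs and using Lemmas~\ref{lemma:uniqueleader}, \ref{lemma:groupintersection}, \ref{lemma:sizeofpart} to place the teams of $S$ into the probability tiers $\frac{1}{3}$, $\frac{1}{9}$, $\frac{1}{6}$), arriving at the same bounds $\frac{2}{3}$, $\frac{5}{9}$, and $\frac{1}{2}$ in the respective subcases. The only cosmetic differences are that you argue the vacuousness of the $3\mathcal{-NCP}_n$ triple case via degree counting rather than the paper's ``four leaders among three teams'' contradiction, and that in the $1\mathcal{-NCP}_n$ case you re-derive the pair--triple intersection inline instead of citing Lemma~\ref{lemma:groupintersection}.
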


	\begin{proof}
	    If $MW(i,T)$ is an MW pair, then both of its teams have assigned probability $\frac{1}{3}$, so the sum of their probabilities is $\frac{2}{3}$.
	    
	    Now assume that $MW(i,T)$ is an MW triple. Suppose that $T \in 3\mathcal{-NCP}_n$. Then there are three MW pairs in $T$. By Lemma \ref{lemma:groupintersection}, two MW pairs have to intersect at exactly one point, so even two MW pairs contain exactly three teams in their union. By Lemma \ref{lemma:sizeofpart}, these three teams are the only teams contained in the union of MW groups in $T$. Since $MW(i,T)$ is an MW triple, it contains all of them, but that is not possible since there would be four leaders in between these three teams, a contradiction with Lemma \ref{lemma:uniqueleader}.
	    
	    Suppose that $T \in 2\mathcal{-NCP}_n$. Similarly, as before, we see that there are at least three teams contained in the union of MW pairs and at most four teams in the union of all MW groups. So any $MW(i,T)$ must contain at least two teams from the union of the MW pairs. But these two teams have assigned probability $\frac{1}{3}$, so $r_{MW(i,T)}(T) \ge \frac{2}{3}$.
	    
	    Suppose that $T \in 1\mathcal{-NCP}_n$. By Lemma \ref{lemma:groupintersection}, all MW groups intersect, so $T$ contains at least one team from some MW pair. Thus, this one team has assigned probability $\frac{1}{3}$. The remaining teams are in $MW(i,T)$ (an MW triple), they have assigned probability at least $\frac{1}{9}$. Thus, $r_{MW(i,T)}(T) \ge \frac{5}{9}$.
	    
	    Suppose that $T \in 0\mathcal{-NCP}_n$. Then all teams in $MW(i,T)$ have assigned probability $\frac{1}{6}$, so $r_{MW(i,T)}(T) = \frac{1}{2}$.
	\qed \end{proof}

	\begin{corollary}\label{corollary:winninggroup}
	     Let $r$ be the \SigOnly rule, and $S$ be a group of at most three teams containing a team beating all the teams outside $S$ in $T$. Then $r_S(T) \ge \frac{1}{2}$. Additionally, if $|S| = 2$, then $r_S(T) \ge \frac{2}{3}$.
	\end{corollary}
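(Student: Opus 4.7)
The plan is to deduce this corollary directly from Lemma~\ref{lemma:NWgroup} via a short case analysis driven by the structure of the ``almost-winner'' in $S$. Let $i \in S$ denote the team that beats every team outside $S$; then $\delta^{-}(i,T) \subseteq S \setminus \{i\}$, and in particular $|\delta^{-}(i,T)| \leq |S|-1 \leq 2 = k-1$. This single observation rules out $T$ being far-Condorcet — that is the only substantive step, and it is essentially a definitional check against Definition~\ref{def:winninggroup}.

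With that in hand, I would split into the Condorcet and near-Condorcet cases. If $T \in \mathcal{CC}_n$, then the Condorcet-winner $c$ beats $i$ unless $c = i$, so $c \in \{i\} \cup \delta^{-}(i,T) \subseteq S$, and $r_S(T) \geq r_c(T) = 1$ directly from the \SigOnly~rule's definition on $\mathcal{CC}_n$. Otherwise $T$ is near-Condorcet, and $MW(i,T) := \{i\} \cup \delta^{-}(i,T)$ is an MW group contained in $S$. Applying Lemma~\ref{lemma:NWgroup} to $MW(i,T)$ gives $r_{MW(i,T)}(T) \geq \tfrac{1}{2}$, and since $MW(i,T) \subseteq S$, monotonicity of the probability measure yields $r_S(T) \geq r_{MW(i,T)}(T) \geq \tfrac{1}{2}$, closing the general bound.

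For the sharper bound $r_S(T) \geq \tfrac{2}{3}$ when $|S| = 2$, the plan is to show $MW(i,T)$ must equal all of $S$. In the Condorcet case, $r_S(T) = 1 \geq \tfrac{2}{3}$ is immediate as above. In the near-Condorcet case, $|MW(i,T)| \leq |S| = 2$, and it cannot equal $1$ since that would force $i$ to be a Condorcet-winner, contradicting $T \notin \mathcal{CC}_n$. Hence $MW(i,T) = S$ is an MW pair, and the second half of Lemma~\ref{lemma:NWgroup} applied to this pair gives $r_S(T) \geq \tfrac{2}{3}$.

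The entire argument is structural rather than computational; the only obstacle worth flagging is confirming that the hypothesis ``$S$ contains a team beating all teams outside $S$'' is exactly strong enough to force $T \in \mathcal{CC}_n \cup 0\mathcal{-NCP}_n \cup 1\mathcal{-NCP}_n \cup 2\mathcal{-NCP}_n \cup 3\mathcal{-NCP}_n$, and simultaneously to exhibit an MW group sitting inside $S$. Once that reduction is in place, Lemma~\ref{lemma:NWgroup} does all remaining work, so no new analysis of the \SigOnly~rule is needed.
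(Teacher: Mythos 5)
Your proof is correct and takes essentially the same route as the paper's: the paper's one-line argument likewise reduces the corollary to Lemma~\ref{lemma:NWgroup} by noting that the hypothesis forces $S$ to contain an MW pair or triple. You merely spell out the details (the Condorcet case and the exclusion of far-Condorcet tournaments) that the paper leaves implicit.
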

	
	\begin{proof}
	   Follows from Lemma \ref{lemma:NWgroup} by observing that any group of three teams contains an MW pair or triple.
	\qed \end{proof}
	
Hence, every coalition capable of making a Condorcet-winner has a substantial chance of winning. We are now ready to present the more difficult parts of Theorem \ref{thm:main3snm1/2}. We need to show that no two or three teams can increase their joint probability of winning by more than $\frac{1}{3}$ or $\frac{1}{2}$, respectively.
	
	\begin{lemma}
	\label{lem:sig2snm1/3}
	The \SigOnly~rule is \kSNMx{2}{1/3} and this is tight.
	\end{lemma}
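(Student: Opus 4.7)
By the symmetry of $\kSNMx{2}{1/3}$ with respect to swapping $T$ and $T'$, I may assume without loss of generality that $a$ beats $b$ in $T$ and $b$ beats $a$ in $T'$; set $\Delta := r_S(T') - r_S(T)$. Since $a$ wins one fewer match in $T'$ than in $T$, the monotonicity of $\SigOnly$ (proved as a separate part of Theorem~\ref{thm:main3snm1/2}) gives $r_a(T') \le r_a(T)$. Therefore $\Delta \le r_b(T') - r_b(T) \le r_b(T')$, and it suffices to bound $r_b(T')$ appropriately.

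Next, I verify that $\SigOnly$ assigns probability at most $1/3$ to any non-Condorcet team when $n \ge 6$. Across the sub-types of the rule, MW pair members receive exactly $1/3$; MW triple members in $0\mathcal{-NCP}_n$ and $1\mathcal{-NCP}_n$ receive $1/6$ or $1/9$; and the uniform residual for non-significant teams is at most $1/6 < 1/3$ for every $n \ge 6$, using Lemma~\ref{lemma:sizeofpart} to bound the number of significant teams and hence the residual denominator. So whenever $b$ is not the Condorcet-winner of $T'$, we have $r_b(T') \le 1/3$ and the desired inequality $\Delta \le 1/3$ follows. In the remaining case $b$ is the Condorcet-winner of $T'$, so $b$ beats every team in $T'$ and in particular the only team beating $b$ in $T$ is $a$; this makes $\{a,b\}$ an MW pair in $T$ with $b$ as leader, forcing $T$ to lie in $1\mathcal{-NCP}_n \cup 2\mathcal{-NCP}_n \cup 3\mathcal{-NCP}_n$. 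In each of these sub-types, pair members receive probability $1/3$, so $r_S(T) = 2/3$, and since $r_S(T') \le 1$ we again conclude $\Delta \le 1/3$.

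For tightness, I exhibit any $T \in 1\mathcal{-NCP}_n$ whose unique MW pair is $\{a,b\}$ with $b$ as the leader; concretely, one may take $b$ losing only to $a$, let $a$ lose to two further teams so $a$ is a triple leader (ensuring that $\{a,b\}$ remains the only MW pair), and arrange the remaining teams so that each has at least three losses. Then $T'$ has $b$ as its Condorcet-winner, giving $r_S(T) = 2/3$ and $r_S(T') = 1$, attaining $\Delta = 1/3$ exactly. The main obstacle is the per-type residual bound in the second paragraph, which requires walking through each near-Condorcet sub-type and using Lemma~\ref{lemma:sizeofpart} to control the significant-team count; once that bound is in hand, the remainder is a short routine check.
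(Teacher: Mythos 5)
Your proof is correct in substance but takes a genuinely different route from the paper's. The paper proves the $|S|=2$ case by an exhaustive case analysis on which class $T'$ falls into ($\mathcal{CC}_n$, then $\mathcal{FC}_n \cup 0\mathcal{-NCP}_n$, then $\bigcup_{\ell=1}^{3}\ell\mathcal{-NCP}_n$), with sub-cases on whether $a$ and $b$ lie in the same or in different MW pairs, resolved via Lemma~\ref{lemma:groupintersection} and Lemma~\ref{lemma:sizeofpart}. You instead factor the problem through monotonicity: after relabeling so that $a$ beats $b$ in $T$, monotonicity gives $r_a(T')\le r_a(T)$, reducing everything to the pointwise bound $r_b(T')\le \frac13$ for non-Condorcet teams, with the single exceptional case where $b$ is the Condorcet-winner of $T'$ handled by noting $r_S(T)\ge\frac23$. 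This is shorter and more modular, but it leans on Lemma~\ref{lem:sigmonotone} being established independently (it is, so there is no circularity), and it does not extend as cleanly to the $k=3$ analysis, where the colluders control three matches and no single team's monotonicity absorbs the gain; the paper's case analysis is the template that does generalize. Your tightness witness (an explicit $T\in 1\mathcal{-NCP}_n$ with unique MW pair $\{a,b\}$ led by $b$) is realizable for $n\ge 6$ and is more self-contained than the paper's appeal to the general lower bound of~\cite{RSEB}.

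Two small points to tighten. First, your opening ``WLOG'' should be justified by relabeling $a\leftrightarrow b$ within $S$, not by swapping $T$ and $T'$ (swapping the tournaments flips the sign of $\Delta$ and proves the wrong inequality); the relabeling argument is the valid one and gives exactly the same reduction. Second, in the case where $b$ is the Condorcet-winner of $T'$, you assert that $\{a,b\}$ is an MW pair in $T$, but an MW group is only defined when $T$ is near-Condorcet: if $a$ happens to beat every team in $T$, then $T\in\mathcal{CC}_n$ and there is no MW pair at all. The conclusion survives trivially there since $r_S(T)=r_a(T)=1$, but you should state this sub-case explicitly.
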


	\begin{proof}
	    Let $T$ and $T'$ be $S$-adjacent tournaments for some $S$, where $|S| = 2$. We need to show that 
	    \begin{equation} \label{eq:proveonethird}
	      r_S(T') = \sum_{i\in S} r_i(T') \le \frac{1}{3}+\sum_{i\in S}r_i(T) = \frac{1}{3}+r_S(T).  
	    \end{equation}
	    
	We will split the analysis according to the group of tournaments $T'$ belongs to. 
	By Lemma \ref{lem:properlydefined}, we know that the probabilities of all teams sum up to 1. So Condition \eqref{eq:proveonethird} is satisfied trivially when $r_S(T') \le \frac{1}{3}$ or $r_S(T) \ge \frac{2}{3}$.	
	
	Suppose that $T' \in \mathcal{CC}_n$. If $S$ does not contain the Condorcet-winner $w$ of $T'$, then $r_S(T') = 0$, and we are done. Otherwise, $w$ is in $S$ and beats all teams outside of $S$ in $T'$ as well as in $T$. Therefore, by Lemma \ref{corollary:winninggroup}, $r_S(T) \ge \frac{2}{3}$, and we are done. Now we deal with cases when $T'$ is not a Condorcet tournament. If $T$ is a Condorcet tournament, it means that its Condorcet-winner must be in $S$ (because $T'$ does not have a Condorcet-winner), and so $r_S(T) = 1$, and we are done. Thus, in the rest of the proof, we can assume that $T$ is not a Condorcet tournament. 
	
	Suppose that $T' \in \mathcal{FC}_n \cup 0\mathcal{-NCP}_n$. In this case $r$ assigns probability at most $\frac{1}{6}$ to every team in $T'$. Thus, $r_S(T')\le 2 \cdot \frac{1}{6}=\frac{2}{6}$ and we are done.
	
	Suppose that $T' \in \cup_{\ell=1}^{3} \ell\mathcal{-NCP}_n$. Let $S = \{a,b\}$. Tournament $T'$ is \newline near-Condorcet, so every team has a probability at most $\frac{1}{3}$ to win, and if it does not have probability $\frac{1}{3}$, then it has probability at most $\frac{1}{6}$. If $r_S(T')\le \frac{1}{3}$, then we are done. The only cases when it does not happen is when either both $a$ and $b$ have probability $\frac{1}{3}$ to win, or exactly one of them has. If both of them have assigned $\frac{1}{3}$, then they are from some MW pairs in $T'$. If they are from the same MW pair $MW(i,T')$, then $S$ contains this pair. Thus, by Lemma \ref{corollary:winninggroup}, $r_S(T) \ge \frac{2}{3}$ and we are done. The other option is that $a$ and $b$ belong to different MW pairs $\alpha$ and $\beta$, respectively. By Lemma \ref{lemma:groupintersection}, these pairs intersect in some team $c$ and $c$ is the leader of exactly one pair from $\alpha$ and $\beta$. Without a loss of generality, $c$ is the leader of $\alpha$. Thus, $a$ is the only team that beats $c$ in $T'$. Since $c$ is not in $S$, it is beaten by only $a$ in $T$ as well. It means that $T$ is a~near-Condorcet tournament with MW pair $\alpha$. Hence, $a$ gets probability $\frac{1}{3}$ in $T$. Thus,
	\[r_S(T')= \frac{1}{3}+\frac{1}{3} \le \frac{1}{3} + r_S(T).\]
	
	It remains to solve the case when exactly one of $a,b$ has a probability $\frac{1}{3}$ to win in $T'$. Without loss of generality, let it be $a$. Since $n\ge 6$, then $b$ get probability at most $\frac{1}{9}$. Thus, it is sufficient to show that $r_S(T) \ge \frac{1}{9}$. Let $\alpha = \{a,c\}$ be an MW pair in $T'$ team $a$ is part of. We know that $b$ is not in $\alpha$. If $a$ is not the leader of $\alpha$, by similar argument as in the previous paragraph, $r_a(T) = \frac{1}{3}$ and we are done. If $a$ is the leader of $\alpha$ then $a$ is beaten only by $c$ and $b$ in $T$. Hence, $\{a,b,c\}$ is an MW triple in $T$ with leader $a$. Either $T$ is from $2\mathcal{-NCP}_n \cup 3\mathcal{-NCP}_n$ or $1\mathcal{-NCP}_n \cup 0\mathcal{-NCP}_n$. In the latter case, all teams in MW triples have assigned probability at least $\frac{1}{9}$, so $r_S(T) \ge \frac{2}{9}$. On the other hand, if $T \in 2\mathcal{-NCP}_n \cup 3\mathcal{-NCP}_n$, then there are at least two MW pairs in $T$. By Lemma \ref{lemma:groupintersection}, these two MW pairs have to intersect at exactly one point, so there are exactly three teams in their union. By Lemma \ref{lemma:sizeofpart}, there are at most 4 teams in the union of all MW groups. Hence, $a$ or $b$ is in some MW pair and $r_S(T) \ge \frac{1}{3}$.

    The fact that this is tight follows from the lower bounds in~\cite{RSEB}. It suffices to consider a tournament with $n = 3$ teams where each team wins exactly one game. In this case, there are exactly three MW pairs. Each team will win the tournament with probability $1/3$. If any two of them collude to make one of them a Condorcet-winner, their changes of winning will increase by $1-2/3=1/3$.  
	\qed \end{proof}

	\begin{lemma}
	\label{lem:sig3snm1/2}
	    The \SigOnly~rule is \kSNMx{3}{1/2} and this is tight.
	\end{lemma}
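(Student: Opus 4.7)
The plan is to mirror the case analysis from the proof of Lemma~\ref{lem:sig2snm1/3}, enlarged to handle $|S|=3$ and the looser bound $1/2$. Fix $S$-adjacent tournaments $T, T'$ with $|S|=3$; I want $r_S(T') \le r_S(T) + 1/2$. As in Lemma~\ref{lem:sig2snm1/3}, I may assume $r_S(T') > 1/2$ and $r_S(T) < 1/2$, and I may further assume $T \notin \mathcal{CC}_n$: if $T$'s Condorcet-winner $w'$ lies in $S$ then $r_S(T)=1$, while if $w' \notin S$ then the matches involving $w'$ are unchanged in $T'$, forcing $T' \in \mathcal{CC}_n$ with the same winner and reducing to a case handled separately. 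I then case-split on the class of $T'$ in the partition of $\mathcal{T}_n$.

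Several classes of $T'$ are trivial. If $T' \in \mathcal{FC}_n$ then $r_i(T') = 1/n \le 1/6$ for every team $i$ (since $n \ge 6$), so $r_S(T') \le 1/2$. If $T' \in 0\mathcal{-NCP}_n$, a direct count (as in the proof of Lemma~\ref{lem:properlydefined}) shows every team, significant or not, receives probability at most $1/6$, so again $r_S(T') \le 3 \cdot \tfrac{1}{6} = \tfrac{1}{2}$. If $T' \in \mathcal{CC}_n$ with Condorcet-winner $w \in S$, then $w$'s matches with outsiders are preserved between $T'$ and $T$, so $w$ still beats every outsider of $S$ in $T$ and Corollary~\ref{corollary:winninggroup} gives $r_S(T) \ge 1/2$; if $w \notin S$ then $r_S(T')=0$.

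The substantive case is $T' \in 1\mathcal{-NCP}_n \cup 2\mathcal{-NCP}_n \cup 3\mathcal{-NCP}_n$, which I split according to how many teams of $S$ lie in MW pairs of $T'$ (bounded by $3$ via Lemma~\ref{lemma:sizeofpart}). When $S$ equals the entire $3$-team union of MW pairs (only possible if $T' \in 2\mathcal{-NCP}_n \cup 3\mathcal{-NCP}_n$), $r_S(T') = 1$ and the argument of Lemma~\ref{lemma:NWgroup} adapts to show $r_S(T) = 1$: the three teams of $S$ still beat every outsider in $T$, so the sub-tournament on $S$ is either Condorcet (which forces $T \in \mathcal{CC}_n$ with CW in $S$) or cyclic (placing $T \in 2\mathcal{-NCP}_n \cup 3\mathcal{-NCP}_n$ with $S$ as the MW-pair union). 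The remaining non-trivial sub-cases are: $S$ containing two pair teams of $T'$ plus one other team, giving $r_S(T') \le 2/3$; and $T' \in 1\mathcal{-NCP}_n$ with $S$ containing both pair teams $u, v$ plus one significant triple team $w$, giving $r_S(T') \le 7/9$. The \emph{key structural observation} in both is that the pair leader $v \in S$, whose unique loss partner $u$ lies in $S$, has all its losses inside $S$ in $T'$; hence in $T$ team $v$ retains at most two losses and is either the CW of $T$, a pair leader (so $r_v(T) = 1/3$), or a triple leader (so $r_v(T) \ge 1/9$). A symmetric analysis applies to $u$, and MW groups lying entirely outside $S$ in $T'$ survive unchanged in $T$ since their matches are external, allowing one to boost the joint probability on $S$ to the required thresholds $1/6$ and $5/18$ respectively.

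The hard part will be the $T' \in 1\mathcal{-NCP}_n$ sub-case with $r_S(T')=7/9$, where one must rule out the configuration in which all three of $u, v, w$ simultaneously drop below probability $1/9$ in $T$; this requires carefully tracking the $\ell\mathcal{-NCP}_n$-class of $T$ produced by each sign pattern of the three flipped matches inside $S$, and leveraging the surviving MW groups to show that at least one team of $S$ is a pair leader in $T$. Finally, for tightness I will exhibit a $6$-team tournament $T \in 0\mathcal{-NCP}_6$ in which every team has exactly two losses (so each team is an MW-triple leader and receives probability $1/6$, making $r_S(T)=1/2$ for every $|S|=3$). Choosing $S$ as a cyclic MW triple $\{a, b, c\}$ in $T$ and flipping its three internal matches so that $a$ beats both $b$ and $c$ makes $a$ the Condorcet-winner of $T'$ (since $a$ already beats every outsider in $T$), giving $r_S(T')=1$ and the tight difference $r_S(T')-r_S(T) = 1/2$.
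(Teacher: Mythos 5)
Your overall architecture matches the paper's: case-split on the class of $T'$, dispose of $\mathcal{CC}_n$, $\mathcal{FC}_n$ and $0\mathcal{-NCP}_n$ immediately, use Corollary~\ref{corollary:winninggroup} whenever $S$ contains a whole MW pair, and close with the two-interlocking-triangles tightness example on $6$ teams. But there is a genuine gap in your treatment of $1\mathcal{-NCP}_n$: you have misidentified which sub-case is hard, and the actually hard sub-case is missing from your enumeration. The configuration you flag as difficult --- $S$ contains both teams $u,v$ of the unique MW pair $X$ plus a triple team, $r_S(T')\le 7/9$ --- is in fact trivial: since $S\supseteq X$, the leader of $X$ beats every team outside $S$ in $T'$ and hence in $T$, so Corollary~\ref{corollary:winninggroup} gives $r_S(T)\ge 1/2$ and you are done with no further work. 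The genuinely delicate case, which your split "by how many teams of $S$ lie in MW pairs" should have produced but does not, is when $S$ contains exactly \emph{one} team $a$ of $X$ together with two MW-triple teams, so $r_S(T') = 1/3+1/9+1/9 = 5/9 > 1/2$. Here Corollary~\ref{corollary:winninggroup} is unavailable, and one must show $r_S(T)\ge 1/18$ by a careful analysis: if $a$ is not the leader of $X$ then the leader (outside $S$) still loses only to $a$ in $T$ and $r_a(T)=1/3$; if $a$ is the leader of $X$ one must rule out new MW pairs in $T$ and then argue that at least one of the three colluders survives in some MW group of $T$ (tracking that $a$ either keeps $\le 2$ losses or loses to both teammates, in which case those teammates win at least as many matches in $T$ as in $T'$ and a triple leader among them persists). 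This is the bulk of the paper's proof and your proposal does not contain it.

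A secondary flaw: your "key structural observation" (the pair leader's unique loss partner lies in $S$) only describes the sub-case where $S$ contains a full MW pair, which again is already settled by Corollary~\ref{corollary:winninggroup}. In the sub-case where $S$ contains two pair teams from \emph{different} MW pairs of $T'\in 2\mathcal{-NCP}_n\cup 3\mathcal{-NCP}_n$, the relevant loss partner is the intersection team $d\notin S$, and the correct argument is that $d$ retains its unique loss to $a\in S$ in $T$, so $a$ remains in an MW pair of $T$ and $r_S(T)\ge 1/3$. Relatedly, your claim that when $S$ equals the three-team union of the MW pairs all three members beat every outsider (hence $r_S(T)=1$) fails for $T'\in 2\mathcal{-NCP}_n$: the non-leader in that union may lose to arbitrarily many outsiders. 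The weaker conclusion $r_S(T)\ge 1/2$ from Corollary~\ref{corollary:winninggroup} is what you actually need and is all that is available.
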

	
	\begin{proof}
	
	Let $T$ and $T'$ be $S$-adjacent tournaments, with $|S| \leq 3$. We need to show that 
	    \begin{equation} \label{eq:proveonehalf}
	      r_S(T') = \sum_{x\in S} r_x(T') \le \frac{1}{2}+\sum_{x\in S}r_x(T) = \frac{1}{2}+r_S(T).  
	    \end{equation}
	    
	In the previous Lemma, we showed an even stronger inequality for cases when $|S| = 2$. Thus, we can assume that $|S| = 3$. We split the analysis according to the class of tournaments $T'$ belongs to. By Lemma \ref{lem:properlydefined}, we know that the probabilities of all teams sum up to 1. So Equation \eqref{eq:proveonehalf} is satisfied trivially when $r_S(T') \le \frac{1}{2}$ or $r_S(T) \ge \frac{1}{2}$.	
	
    \textbf{Case 1: $T' \in \mathcal{CC}_n$}. If $S$ does not contain the Condorcet-winner $w$ of $T'$ then $r_S(T') = 0$. Otherwise, $w$ is in $S$ and beats all teams outside of $S$, even in $T$. Therefore, by Corollary \ref{corollary:winninggroup}, $r_S(T) \ge \frac{1}{2}$. 
    
    Now we deal with cases when $T'$ is not a Condorcet tournament. If $T$ is a Condorcet tournament, it means that its Condorcet-winner must be in $S$, so $r_S(T) = 1$. Thus, in the rest of the proof, we can assume that $T$ is not a Condorcet tournament. 
	
    \textbf{Case 2: $T' \in \mathcal{FC}_n \cup 0\mathcal{-NCP}_n$}. In this case $r$ assigns probability at most $\frac{1}{6}$ to every team in $T'$. Thus, $r_S(T')\le 3 \cdot \frac{1}{6}=\frac{1}{2}$, and we are done.
	
    \textbf{Case 3: $T' \in 3\mathcal{-NCP}_n \cup 2\mathcal{-NCP}_n$}. In this case, teams in MW pairs get $\frac{1}{3}$, and the rest get zero. By Lemma \ref{lemma:sizeofpart}, there are exactly three teams with $\frac{1}{3}$ in $T'$ (there cannot be two since there are at least two different MW pairs). If $S$ contains at most one such team, then $r_S(T') \le \frac{1}{3}$ and we are done. If $S$ contains both teams from some MW pair in $T'$, then the leader of $S$ beats all the teams outside of $S$ in $T'$ as well as in $T$, so by Lemma \ref{corollary:winninggroup}, $r_S(T) \ge \frac{1}{2}$, and we are done. The remaining case is if $S$ contains exactly two teams with assigned probabilities $\frac{1}{3}$, say $a$ and $b$, and these two teams are in different MW pairs $\alpha$ and $\beta$, respectively. By Lemma \ref{lemma:groupintersection}, these pairs intersect in some team $d$, and $d$ is the leader of either $\alpha$ or $\beta$. Without loss of generality, $d$ is the leader of $\alpha$. Thus, $a$ is the only team that beats $d$ in $T'$. Since $d$ cannot be in $S$ (it would be the third team with $\frac{1}{3}$), it is beaten by $a$ in $T$ as well. It means that $T$ is a~near-Condorcet tournament with MW pair $\alpha$. Hence $a$ gets probability $\frac{1}{3}$. Thus,
		\[r_S(T')= \frac{1}{3}+\frac{1}{3} + 0 \le \frac{1}{2}+\frac{1}{3} \le \frac{1}{2} + r_S(T).\]
	
    \textbf{Case 4: $T' \in  1\mathcal{-NCP}_n$}. Let $S = \{a,b,c\}$. In this case, only  two teams in $T'$ have assigned probability $\frac{1}{3}$; the teams of the only MW pair $X$. If there are teams in MW triples, they have assigned $\frac{1}{9}$, and the rest of the probability mass is assigned uniformly to the remaining teams. It means that if there is no MW triple, the remaining teams get $\frac{1}{3(n-2)} \le \frac{1}{12}$. And if there is some MW triple, it has to contain two distinct teams from teams in $X$, so the remaining probability mass is either $\frac{1}{9}$, or $0$. Thus, in this case, every teams outside MW groups get at most $\frac{1}{9(n-4)} \le \frac{1}{18}$. We now split the analysis according to the probabilities assigned to the teams in $S$ in $T'$. If $r_S(T') \le \frac{1}{2}$, then the Equation \eqref{eq:proveonehalf} holds trivially. The only cases when this does not happen arise whenever $S$ contains two teams with $\frac{1}{3}$ or if it contains one team with $\frac{1}{3}$ and two teams with $\frac{1}{9}$.
	
	In the first case $S$ contains both teams of $X$, so once again $r_S(T) \ge \frac{1}{2}$ by Corollary \ref{corollary:winninggroup}, and Equation \eqref{eq:proveonehalf} holds. In the second case, $S$ contains one team of $X$, let say it is $a$, and two teams from some MW triples (so $r_S(T') = \frac{5}{9}$). Let us first assume that $a$ is not the leader of $X$. It means that the leader of $X$ is not in $S$ and loses only to $a$ in $T'$. Thus, he must lose only to $a$ in $T$ as well. Hence, $T$ is a~near-Condorcet tournament with MW pair $X$ and $r_S(T) \ge r_a(T) \ge \frac{1}{3}$. Therefore, Equation \eqref{eq:proveonehalf} holds. The remaining option is that $a$ is the leader of $X$. If there would be some other MW pair in $T$ other than $X$, its leader would have to be in $S$ because teams in $S$ are the only teams that can beat more teams in $T$ than in $T'$. But it would imply $r_S(T) \ge  \frac{1}{3}$, and we would be done. Hence, we can assume that there is no new MW pair in $T$. It means that all teams in MW groups get at least $\frac{1}{9}$. Thus, it is sufficient to show that $a$, $b$ or $c$ is in some MW group in $T$.

     %\begin{figure}
            %\centering
            %\includegraphics{}
           % \caption{On the left, there is a tournament $T'$ with MW pair $X$ with leader $a$. Then there is a set $S$ with teams $a,b,c$, and an MW triple $Y$ containing $c$ and having $y$ as its leader. On the right, there is an $S$-adjacent tournament $T$, that can differ from $T$ only in $S$. In this particular case, we see that $Y$ remains an MW triple. }
            %\label{fig:chp01:onehalfB}
        %\end{figure}
	
	Team $a$ loses only one match in $T'$. If it lost only one match in $T$, it would be the leader of an MW pair in $T$. If it lost only two matches in $T$, say to $f,g$, then it would be the leader of MW triple $\{a,f,g\}$. Hence, the only remaining case is when $a$ loses to three teams in $T$. Thus, $a$ must lose to $b$ and $c$. Team $c$ is in some MW triple $Y$ in $T'$. Let $y$ be the leader of $Y$. If $y$ is outside $S$, then $y$ still loses only two matches in $T$. Hence, $Y$ is still an MW triple in $T$ and $c$ is part of it, so we are done. The other option is that $y$ is inside $S$. It cannot be $a$, because $a$ is already a leader of $X$ in $T'$. Therefore, $y$ is either $b$, or $c$. But both $b$ and $c$ win their matches in $T$ against $a$, the matches they lost in $T'$. Hence, they both have to win at least as many matches in $T$ as in $T'$.  Thus, $y$ still loses at most two matches in $T$. Therefore, $y$ (hence $b$ or $c$) is still a leader of some MW group in $T$.

    The fact that this rule is tight follows from a slight generalization of the lower bounds of \cite{RSEB}. Suppose there are $n=6$ teams: $\{A, B, C, D, E, F\}$. Teams $A, B, C$ form a cycle. Teams $D, E, F$ also form a cycle. Then $A$ beats $E, F$ but loses to $D$, $B$ beats $D, F$ but loses to $E$ and $C$ beats $D, E$ but loses to $F$. This tournament no MW pairs, only MW triples. Moreover, each team is part of a MW triple. Therefore, the rule will assign probability $1/6$ to all teams. However, any MW triple can collude to make their leader a Condorcet-winner, increasing their chances of winning from $1/6+1/6+1/6=1/2$ to $1+0+0=1$. Therefore, our analysis of this rule is tight. 
 
	\qed \end{proof}
	Finally, we prove that the rule is monotone. 
	
	 \begin{lemma}
	 \label{lem:sigmonotone}
        The \SigOnly~rule is monotone.
    \end{lemma}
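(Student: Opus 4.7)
The plan is to reduce to the case where $T$ and $T'$ differ by a single flipped match. Concretely, if $\delta^+(i,T) \supsetneq \delta^+(i,T')$, list the teams in $\delta^+(i,T) \setminus \delta^+(i,T')$ as $j_1, \dots, j_m$ and consider the chain $T' = T_0, T_1, \dots, T_m = T$ where $T_\ell$ is obtained from $T_{\ell-1}$ by flipping the match between $i$ and $j_\ell$ in $i$'s favor. It then suffices to prove $r_i(T) \ge r_i(T')$ under the assumption that $T$ and $T'$ agree everywhere except on a single match between $i$ and some team $j$, with $i$ beating $j$ in $T$ and losing to $j$ in $T'$.

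The key observation is that flipping this one edge changes only two in-degrees: $|\delta^-(i,T)| = |\delta^-(i,T')| - 1$ and $|\delta^-(j,T)| = |\delta^-(j,T')| + 1$, while all other teams have identical loss counts in $T$ and $T'$. Consequently the categories $\mathcal{CC}_n, \mathcal{FC}_n, \ell\mathcal{-NCP}_n$ and the identities of leaders can shift only at $i$ and $j$. Using Lemmas~\ref{lemma:uniqueleader}, \ref{lemma:groupintersection}, and \ref{lemma:sizeofpart} to control the MW structure, we proceed by case analysis on the category of $T$.

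When $T \in \mathcal{CC}_n$ with Condorcet-winner $w$, either $w=i$, giving $r_i(T)=1 \ge r_i(T')$ trivially, or $w \ne i$, in which case $w \ne j$ as well (since $w$ beats $j$ in $T$), so $w$ remains the Condorcet-winner in $T'$ and $r_i(T')=0$. When $T \in \mathcal{FC}_n$, team $i$ has at least four losses in $T'$ and so is never significant in $T'$; the only way $T'$ escapes $\mathcal{FC}_n$ is if $j$ had exactly three losses in $T$ and becomes the sole MW-triple leader in $T'$, and then an elementary calculation using $n \ge 6$ shows $r_i(T') = 1/(2(n-3)) \le 1/n = r_i(T)$. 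The remaining cases $T \in \ell\mathcal{-NCP}_n$ for $\ell \in \{0,1,2,3\}$ are handled analogously by enumerating the few categories that $T'$ can belong to under a single-edge perturbation and comparing the bucket assigned to $i$ on each side.

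The main obstacle is the analysis of transitions among the near-Condorcet strata, because flipping the $i$-$j$ edge can simultaneously create an MW pair/triple led by $i$ and destroy one led by $j$, shifting probability mass between the $\{1/3, 1/9, 1/6\}$ buckets and the ``uniform remainder'' bucket. The guiding principle that makes every subcase go through is that the strict decrease in $i$'s loss count can only move $i$ to a weakly more significant role (non-significant $\to$ MW-triple member $\to$ MW-pair member $\to$ Condorcet-winner), while the dual increase in $j$'s loss count can only shrink $j$'s MW contribution and hence enlarge the significant mass that $i$ either inherits directly or inherits via a smaller uniform-remainder denominator. A subcase-by-subcase check, driven by the bounds on significant-team counts in Lemma~\ref{lemma:sizeofpart}, then confirms $r_i(T) \ge r_i(T')$ in all situations.
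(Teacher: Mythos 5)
Your overall strategy --- reduce to a single flipped match between $i$ and $j$, note that only the in-degrees of $i$ and $j$ change, and compare the bucket the rule assigns to $i$ in $T$ versus $T'$ --- is exactly the paper's strategy, and your $\mathcal{CC}_n$ and $\mathcal{FC}_n$ cases are essentially correct (the $1/(2(n-3)) \le 1/n$ computation for $n \ge 6$ is fine). The problem is that the near-Condorcet cases, which you yourself identify as ``the main obstacle,'' are not actually carried out, and the ``guiding principle'' you offer in their place does not suffice as stated. The payoff of a given role is not monotone in the role alone: an MW-triple member receives $\frac{1}{6}$ in $0\mathcal{-NCP}_n$, $\frac{1}{9}$ in $1\mathcal{-NCP}_n$, and $0$ in $2\mathcal{-NCP}_n \cup 3\mathcal{-NCP}_n$, and the uniform remainder can be larger or smaller than these depending on the stratum and on how many teams are significant. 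So ``$i$ moves to a weakly more significant role'' does not by itself yield $r_i(T) \ge r_i(T')$; one must also control how the stratum of $T'$ can differ from that of $T$.

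The fact that actually closes these cases --- and which your write-up never states --- is the following: the only team whose loss count decreases from $T$ to $T'$ is $j$, so $j$ is the only possible \emph{new} leader in $T'$; and since $j$ beats $i$ in $T'$, we have $i \notin MW(j,T')$, so $i$ can never gain membership in a newly created MW group (symmetrically, any MW group present in $T$ but absent from $T'$ must be led by $i$ itself). This is what the paper's proof invokes in each stratum, e.g.\ to rule out $T'$ acquiring a new MW pair containing $i$, or to show that a non-significant $i$ stays non-significant while the uniform remainder only shrinks. A related slip appears even in your $\mathcal{FC}_n$ case: you argue $i$ is not significant in $T'$ because it has at least four losses, but a team with many losses can still be a \emph{member} of another leader's MW group (it only cannot be a leader); the correct reason is that the sole candidate leader in $T'$ is $j$, and $j$ beats $i$ there. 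Without making this observation explicit and then walking through the possible stratum transitions (which pairs and triples can appear or disappear, and what happens to the remainder denominator), the proof is incomplete.
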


    \begin{proof}
        Let $a$ be a team and $T, T'$ be tournaments where all matches not involving team $a$ are identical and $\delta^{+}(a, T) \supseteq \delta^{+}(a, T')$. We need to show that $r_a(T) \geq r_a(T')$. By induction it is sufficient to consider only the case when $\delta^{+}(a, T) \setminus \delta^{+}(a, T')$ contains only one element, say $b$. Then $T'$ is $\{a,b\}$-adjacent tournament to $T$, and $a$ loses to $b$ in $T'$. We need to show that $r_a(T') \le r_a(T)$. Note again that only  teams $a$ and $b$ have the outcome of some of their matches different in $T'$ and $T$.
	    
	    According to the assignment of probabilities, each teams get probability $1$, $\frac{1}{3}$, $\frac{1}{6}$, $\frac{1}{9}$, 0, or uniformly distributed remaining value (that is always at most $\frac{1}{6}$). We split the analysis according to the properties of $a$ in $T$.
	    
	    \begin{itemize}
	        \item $r_a(T) = 1$; $a$ is a Condorcet-winner:\\
	        In this case, $r_a(T')$ cannot be bigger than 1. 
	        
	        \item $r_a(T) = \frac{1}{3}$; $a$ is part of some MW pair in near-Condorcet tournament $T$:\\
	        Team $a$ loses one more match in $T'$ than in $T$. Hence it cannot be the Condorcet-winner in $T'$. Thus, its probability could not increase.
	        
	        \item $r_a(T) = \frac{1}{6}$; $T$ is a near-Condorcet tournament without MW pairs and $a$ is part of some MW triple:\\
	    Similarly as before, $a$ cannot become the Condorcet-winner.
     
         Only $b$ wins more matches in $T'$. Thus, only $b$ can possibly be a leader of an MW pair in $T'$. But $a$ loses to $b$ in $T'$. Hence, $a$ cannot be part of this possible new pair. Hence $r_a(T')$ cannot be $1$ or $\frac{1}{3}$ and thus it is at most $\frac{1}{6}$.
	        
	        \item $r_a(T) = \frac{1}{9}$; $T$ is a near-Condorcet tournament with exactly one MW pair \newline $MW(i,T)$, and $a$ is a part of some MW triple $MW(j, T)$:\\
	        By the same argument as before, $a$ cannot become Condorcet-winner nor a part of some MW pair. Additionally, $MW(i,T)$ still remains an MW pair in $T'$, so all teams in MW triples in $T'$ get at most $\frac{1}{9}$, and not significant teams get at most $\frac{1-\frac{2}{3}}{n-2} \le \frac{1}{12}$ (there are at least 6 teams, and the two in $MW(i,T)$ get $\frac{1}{3}$). Thus, the value of $r_a(T')$ is at most $\frac{1}{9}$.

	        \item $r_a(T) = 0$; $T$ is a near-Condorcet tournament with exactly two MW pairs, and $a$ is part of an MW triple:\\
	       There must be at least three teams in the union of the two MW pairs. Since $a$ is not a part of these pairs, these teams also form the same MW pairs in $T'$. Since all teams in MW pairs have assigned probability $\frac{1}{3}$, their joint probability to win is 1. Thus, team $a$ must have assigned $0$ in both $T$ and $T'$.

               \item $a$ is not significant:\\
	       No team can become a leader of a new MW pair because only team $b$ beats more teams in $T'$ than in $T$, and since $a$ is not in any MW group in $T$, $b$ could not lose exactly two matches in $T$.
	        It means that all teams in MW pairs in $T$ get the same probability as teams in MW pairs in $T'$, and the same holds for MW triples. Furthermore, all MW groups in $T$ are still MW groups in $T'$ because $a$ is not part of them. It means that in $T$ we have a group $S$ of all significant teams and a group of remaining teams that have the remaining probability assigned uniformly. Denote this uniform value by $x$. Note that this $x$ is smaller or equal to the significant teams' values. Additionally, note that $a$ gets $x$ since it is not significant. In tournament $T'$, teams from $S$ get the same probabilities as in $T$. Furthermore, a new MW triple can be formed there. Teams inside this new triple get probabilities at least $x$; more than in $T$. And the remaining teams in $T'$ get the remaining probability distributed uniformly; denote it by $y$. Since some teams could become significant and gotten more, $y\le x$. Note that $a$ could not become significant, because $b$ can be the only new leader and $a$ loses to $b$ in $T'$. Therefore, $a$ got $y$ in $T'$, and its assigned probability could not increase.

	    \end{itemize}
	\qed \end{proof}
    
    We are now ready to prove the main result of Section~\ref{sec:sigOnly}.

     \begin{proof}[Proof of Theorem~\ref{thm:main3snm1/2}]
    Follows from Lemmas~\ref{lem:properlydefined},~\ref{lem:sig2snm1/3}, \ref{lem:sig3snm1/2}, ~\ref{lem:sigmonotone} and the observation that the \SigOnly rule is Condorcet-consistent by construction. 
    \qed \end{proof}

\section{Missing Proofs from Section~\ref{sec:reduction}}
\label{app:reduction}

Before we proceed, we need to introduce definitions and tournament rules only relevant to Section~\ref{sec:reduction}. 

\begin{definition}
\label{def:top-cycle}
Given a tournament graph $T$, the \emph{top cycle} $\mathcal{C}(T)$ is the minimal non-empty set of teams $S$ such that there are no edges $(i, j)$ with $i \not \in S$ and $j \in S$. A rule is \emph{top-cycle consistent} if it always selects a team in the top-cycle, i.e. $r_{\mathcal{C}(T)}(T) = 1$.
\end{definition}

\begin{lemma}\cite{LPmanipulability}
Any top-cycle consistent rule $r$ is also Condorcet-consistent. 
\end{lemma}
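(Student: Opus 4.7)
The plan is to show that whenever a tournament $T$ has a Condorcet-winner $w$, the top cycle collapses to the singleton $\{w\}$, so top-cycle consistency forces $r_w(T) = 1$.

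First I would verify that $\{w\}$ itself is a non-empty set satisfying the defining property of the top cycle, namely that there is no edge $(i,j)$ with $i \notin \{w\}$ and $j \in \{w\}$. This is immediate: such an edge would mean some team $i \neq w$ beats $w$, contradicting the fact that the Condorcet-winner $w$ beats every other team. Hence $\{w\}$ is a valid candidate for the minimal non-empty "dominant" set.

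Next I would argue that every non-empty set $S$ with the top-cycle property must contain $w$. Suppose for contradiction $w \notin S$, and pick any $j \in S$ (which exists because $S$ is non-empty). Since $w$ is the Condorcet-winner, $(w,j) \in E$; but then we have an edge from outside $S$ into $S$, contradicting the defining property. Therefore every non-empty candidate set contains $w$, and combined with the previous step this shows $\mathcal{C}(T) = \{w\}$.

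Finally, using top-cycle consistency of $r$, we have $r_{\mathcal{C}(T)}(T) = 1$, i.e., $r_w(T) = 1$, which is exactly the definition of Condorcet-consistency. There is no real obstacle here; the only point requiring any care is making sure the minimality clause in Definition~\ref{def:top-cycle} is unambiguous in the presence of a Condorcet-winner, and the argument above resolves this by showing that $\{w\}$ is both a valid and obviously minimal choice.
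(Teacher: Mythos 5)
Your proof is correct and complete: showing that $\{w\}$ satisfies the dominance property and that every dominant set must contain $w$ indeed forces $\mathcal{C}(T) = \{w\}$, from which $r_w(T) = 1$ follows immediately. Note that the paper itself offers no proof of this lemma --- it is simply cited from prior work --- so your argument, which is the standard one, fills that gap adequately and raises no concerns.
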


We are now ready to introduce our simple \textsc{TopCycle} Rule. 

\begin{definition}[\textsc{TopCycle} Rule]
\label{def:topcycle} The \emph{Uniform Top-Cycle} $\textsc{TopCycle}$ rule operates as follows. Given tournament $T$, sample the winner uniformly at random a team $i \in C(T)$. 
\end{definition}

The proof of Theorem~\ref{thm:inc_num_of_teams_arbitrarily_top-cycle} will be incremental. We first prove how to extend rules $r$ for $n$ teams to rules $r'$ for $n^d$ for any $d$. Consider the following rule $r' := \textsc{Ext}(r, n^d)$ and call it \emph{the extension of} $r$ to $n^d$ teams. Given a tournament $T$ on a~set~$N$ of $n^d$ teams, $r'$ assigns to~$T$ a~distribution over winners obtained in the following three steps:
    \begin{enumerate}[(i)]
        \item Choose a~permutation~$\pi$ on~$N$ uniformly at random. Partition~$N$ into~$n$ groups $G_1^{\pi}, G_2^{\pi}, \dots, G_n^{\pi}$ each containing~$n^{d-1}$ teams such that \[\forall t \in [n] \ : \ G_t^{\pi} = \{\pi(i) \mid (t-1) \cdot n^{d-1} + 1 \leq i \leq t \cdot n^{d-1}\} .\]
        \item For every $t \in [n]$, play the sub-tournament of ~$T$ induced by teams $G^{\pi}_t$ using \textsc{TopCycle} rule to obtain a~distribution $\Delta_t^{\pi}$ over winners in~$G^{\pi}_t$.
        \item For every $t \in [n]$, select a~finalist~$w_t$ from $G^{\pi}_t$ according to the distribution~$\Delta_t^{\pi}$. Let $W := \{w_t \mid t \in [n]\}$ be the set of finalists and play the sub-tournament~$T$ over teams in $W$ using~the rule~$r$ to obtain a~distribution $\Delta_W^\pi$ over winners in $W$.
    \end{enumerate}
    The steps above are a formalization of the rule described in Section~\ref{sec:reduction}. We first bound the manipulability of rule $r'$ as a function of the manipulability of $r$. 

    \begin{theorem} \label{thm__inc_num_of_teams_arbitrarily}
        Let $d, k \geq 2$.
        If there exists a~CC and $\kSNMx{k}{\alpha}$ $r$ for $n$ teams, then there exists a~CC and $\kSNMx{k}{\alpha'}$ $r'$ for $n^d$ teams with 
        \[\alpha' \leq \alpha \Big( 1 - \frac{(k-1)^2}{n} \Big) + \frac{(k-1)^2}{n}.\]
    \end{theorem}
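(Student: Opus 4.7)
The plan is to verify the two claimed properties of $r'$ separately: first Condorcet-consistency, then the $\kSNMx{k}{\alpha'}$ bound, which I will obtain by conditioning on whether the random partition separates the colluding teams. For CC I would observe that if $w$ is the Condorcet-winner of $T$, then for every permutation $\pi$, $w$ beats every other member of its group $G_t^\pi$, so the top-cycle of $T|_{G_t^\pi}$ is $\{w\}$ and $\textsc{TopCycle}$ returns $w$ from that group with probability $1$, making $w$ a finalist almost surely. In the sub-tournament on the finalists $W$, $w$ remains a Condorcet-winner, so CC of $r$ gives $r_w(T|_W) = 1$ and hence $r'_w(T) = 1$.

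For the manipulability bound, I fix $S \subseteq [n^d]$ with $|S| \leq k$ and $S$-adjacent tournaments $T, T'$, and introduce the event $A_\pi$ that $\pi$ assigns all teams of $S$ to pairwise distinct groups $G_1^\pi, \dots, G_n^\pi$. The key observation is that on $A_\pi$, every manipulated edge (whose endpoints both lie in $S$) connects two different groups, so the within-group sub-tournaments $T|_{G_t^\pi}$ and $T'|_{G_t^\pi}$ coincide for all $t$; hence the $n$ invocations of $\textsc{TopCycle}$ induce an identical distribution over finalist sets $W$ under $T$ and $T'$. For any realisation of $W$, the sub-tournaments $T|_W$ and $T'|_W$ are $(S \cap W)$-adjacent with $|S \cap W| \leq k$, so the assumed $\kSNMx{k}{\alpha}$ property of $r$ yields $r_{S \cap W}(T'|_W) - r_{S \cap W}(T|_W) \leq \alpha$. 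Averaging over the conditional distribution of $W$ given $A_\pi$ bounds the conditional gain by $\alpha$; on $A_\pi^c$ I bound the gain trivially by $1$.

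To bound $\Pr[A_\pi^c]$, I use that a fixed pair of teams in $S$ lands in the same group with probability $(n^{d-1}-1)/(n^d-1) \leq 1/n$, so a union bound over the $\binom{|S|}{2} \leq \binom{k}{2}$ pairs gives $\Pr[A_\pi^c] \leq \binom{k}{2}/n \leq (k-1)^2/n$, where the last inequality uses $k(k-1)/2 \leq (k-1)^2$ for $k \geq 2$. Combining these pieces,
\[
r'_S(T') - r'_S(T) \;\leq\; \alpha \cdot \Pr[A_\pi] + 1 \cdot \Pr[A_\pi^c] \;\leq\; \alpha \Big(1 - \frac{(k-1)^2}{n}\Big) + \frac{(k-1)^2}{n},
\]
which is the desired inequality. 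The principal conceptual step is the invariance of the distribution of $W$ under $T$ versus $T'$ on $A_\pi$, which allows the hypothesis on $r$ to be applied ``pointwise in $W$''; once this invariance is set up cleanly, the remainder is the short union-bound calculation above, and no induction on $d$ is needed since the construction is a single-level partition.
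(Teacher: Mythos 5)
Your proposal is correct and follows essentially the same route as the paper's proof: condition on whether the random partition separates the colluding set $S$, observe that on that event the within-group sub-tournaments (and hence the finalist distribution) are identical under $T$ and $T'$ so that the $\kSNMx{k}{\alpha}$ hypothesis on $r$ applies pointwise in $W$, and bound the gain by $1$ otherwise. The only deviation is in bounding the probability of the bad event: the paper counts the separating permutations exactly and lower-bounds the ratio by $\bigl(1-\frac{k-1}{n}\bigr)^{k-1}\geq 1-\frac{(k-1)^2}{n}$, whereas you use a union bound over pairs giving $\binom{k}{2}/n$, which is marginally sharper and equally valid.
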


	\begin{proof}[Proof of Theorem~\ref{thm__inc_num_of_teams_arbitrarily}]
	    Suppose that there exists a~CC and $\kSNMx{k}{\alpha}$~rule $r$ for $n$ teams. We will let $r'$ be the extension of $r$ to $n^d$ teams. Since $\textsc{TopCycle}$ is CC and we assumed that $r$ is CC, then it follows that $r'$ is also CC.

	    We now try to find~$\alpha'$ as small as possible so that~$r'$ is $\kSNMx{k}{\alpha'}$. We proceed directly from the definition:
	    Let~$T$ be a~tournament on a~set~$N$ of~$n^d$ teams and let $S \subseteq N$ be a~set of colluding teams. Assume wlog that $|S| = k.$ 
	    For any tournament $T'$ which is $S$-adjacent to $T$, we show that $r'_S(T') - r'_S(T) \leq \alpha'$.

	   For a~permutation $\pi \in \mathcal{S}_{n^d}$, where $\mathcal{S}_{n^d}$ is the set of all (labeled) seedings of the set of teams $N$, let~$A_\pi$ be an event that~$r'$ chooses~$\pi$ in the first step.
        Let~$B_T$ and~$B_{T'}$ be two events that the winner of~$T$, respectively~$T'$, selected by~$r'$ is in~$S$.
        Now, we can write:
	    \begin{align*}
	        r'_S(T') - r'_S(T) = \pr{B_{T'}} - \pr{B_T} \\
	            = \sum_{\pi \in \Sn{n^d}} \Big( \cpr{B_{T'}}{A_\pi} \cdot \pr{A_\pi} - \cpr{B_T}{A_\pi} \cdot \pr{A_\pi} \Big) \\
	            = \sum_{\pi \in \Sn{n^d}} \Big( \cpr{B_{T'}}{A_\pi} \cdot \frac{1}{(n^d)!} - \pr{B_T}{A_\pi} \cdot \frac{1}{(n^d)!} \Big) \\
	            = \frac{1}{(n^d)!} \cdot \sum_{\pi \in \Sn{n^d}} \Big( \cpr{B_{T'}}{A_\pi}  - \cpr{B_T}{A_\pi} \Big).
	    \end{align*}
	    Let~$\Sn{n^d}^{+}$ be the set of all permutation $\pi \in \Sn{n^d}$ such that $\forall t \in [n] \ : \ |G_t^\pi \cap S| \leq 1$ and let $\Sn{n^d}^{-} := \Sn{n^d} \setminus \Sn{n^d}^{+}$.
	    %
	    %In other words, no two teams in~$S$ are together in some initial group~$G_t$. Moreover, let $\Sn{n^2}^{-} := \Sn{n^2} \setminus \Sn{n^2}^{+}$.
	    %
	    We can continue to write:
        \begin{align*}
            r'_S(T') - r'_S(T) &= \frac{1}{(n^d)!} \cdot \sum_{\pi \in \Sn{n^d}^{+}} \Big( \cpr{B_{T'}}{A_\pi} - \cpr{B_T}{A_\pi} \Big) \\
                &+ \frac{1}{(n^d)!} \cdot \sum_{\pi \in \Sn{n^d}^{-}} \Big( \cpr{B_{T'}}{A_\pi} - \cpr{B_T}{A_\pi} \Big) \\
                &\leq \frac{1}{(n^d)!} \cdot \sum_{\pi \in \Sn{n^d}^{+}} \Big( \cpr{B_{T'}}{A_\pi} - \cpr{B_T}{A_\pi} \Big) \\
                &+ \frac{1}{(n^d)!} \cdot \sum_{\pi \in \Sn{n^d}^{-}} 1.
        \end{align*}
        For every $\pi \in \Sn{n^d}^{+}$, our goal is to upper bound $\cpr{B_{T'}}{A_\pi} - \cpr{B_T}{A_\pi}$ by~$\alpha$. For that we use again conditional probability. 
        Given a~permutation $\pi \in \Sn{n^d}^{+}$ and a~tournament~$T$, let $C_{\pi}^{T}(w_1, w_2, \dots, w_n)$ be an event that~$r'$ chooses~$\pi$ in the first step (i.e., $A_\pi$ happens) and~$r'$ chooses $(w_1, w_2, ..., w_n) \in G_{1}^{\pi} \times G_{2}^{\pi} \times \cdots \times G_{n}^{\pi}$ in the third step. Then
	    %
	    %\begin{align*}
	    %    \cpr{B_{T'}}{A_\pi} - \cpr{B_T}{A_\pi} = \sum_{(w_1, ..., w_n) \in G_1^\pi \times \cdots \times G_n^\pi} \cpr{B_{T'}}{C_{\pi}^{T}(w_1, \dots, w_n)} \cdot \pr{C_{\pi}^{T}(w_1, \dots, w_n)}
	    %\end{align*}
        %
        \begin{align*}
	        \cpr{B_{T'}}{A_\pi} - \cpr{B_T}{A_\pi} \\
	        = \sum \cpr{B_{T'}}{C_{\pi}^{T'}(w_1, \dots, w_n)} \cdot \pr{C_{\pi}^{T'}(w_1, \dots, w_n)} \\
	        - \sum \cpr{B_{T}}{C_{\pi}^{T}(w_1, \dots, w_n)} \cdot \pr{C_{\pi}^{T}(w_1, \dots, w_n)} \\
	        = \sum \Big( \cpr{B_{T'}}{C_{\pi}^{T'}(w_1, \dots, w_n)} \cdot \pr{C_{\pi}^{T'}(w_1, \dots, w_n)} \\
	        - \cpr{B_{T}}{C_{\pi}^{T}(w_1, \dots, w_n)} \cdot \pr{C_{\pi}^{T}(w_1, \dots, w_n)} \Big),
	    \end{align*}
	    where the sums are over $(w_1, w_2, ..., w_n) \in G_{1}^{\pi} \times G_{2}^{\pi} \times \cdots \times G_{n}^{\pi}$.
	    Since $\pi \in \Sn{n^d}^{+}$, we have that $T[G^{\pi}_t] = T'[G^{\pi}_{t}]$. Hence $\pr{C_{\pi}^{T'}(w_1, \dots, w_n)} = \pr{C_{\pi}^{T}(w_1, \dots, w_n)}$.
	    Moreover, $$\cpr{B_{T'}}{C_{\pi}^{T'}(w_1, \dots, w_n)} - \cpr{B_{T}}{C_{\pi}^{T}(w_1, \dots, w_n)} \leq \alpha$$ because~$r$ is $\kSNMx{k}{\alpha}$. Thus, $$\cpr{B_{T'}}{A_\pi} - \cpr{B_T}{A_\pi} \leq \alpha.$$ 
	    And so, 
	    \begin{align*}
        r'_S(T') - r'_S(T) &\leq \frac{1}{(n^d)!} \cdot \sum_{\pi \in \Sn{n^d}^{+}} \alpha + \frac{1}{(n^d)!} \cdot \sum_{\pi \in \Sn{n^d}^{-}} 1 \\
            &= \alpha \cdot \frac{|\Sn{n^d}^{+}|}{(n^d)!} + \frac{|\Sn{n^d}^{-}|}{(n^d)!} \\
            &= \alpha \cdot \frac{|\Sn{n^d}^{+}|}{(n^d)!} + \frac{(n^d)! - |\Sn{n^d}^{+}|}{(n^d)!} \\
            &= \alpha \cdot \frac{|\Sn{n^d}^{+}|}{(n^d)!} + 1 - \frac{|\Sn{n^d}^{+}|}{(n^d)!} \\
            &= 1 - \frac{|\Sn{n^d}^{+}|}{(n^d)!} \cdot (1 - \alpha)
	    \end{align*}
	    Observe that to upper bound the last expression is sufficient to lower bound~$\frac{|\Sn{n^d}^{+}|}{(n^d)!}$.
	    Also observe that
	    \begin{align*}
	        |\Sn{n^d}^{+}| = n^{d} \cdot (n^{d} - n^{d-1}) \cdot \dots \cdot (n^{d} - (k-1)n^{d-1}) \cdot (n^d - k)!.
	    \end{align*}
	    Hence
	    \begin{align*}
	        \frac{|\Sn{n^d}^{+}|}{(n^d)!} &= \frac{n^{d} \cdot (n^{d} - n^{d-1}) \cdot \dots \cdot (n^{d} - (k-1)n^{d-1}) \cdot (n^d - k)!}{(n^d)!} \\
	            &= \frac{(n^{d} - n^{d-1}) \cdot \dots \cdot (n^{d} - (k-1)n^{d-1})}{(n^{d}-1) \cdot \dots \cdot (n^{d} - (k - 1))} \\
	            &= \frac{n^{k (d-1)} \cdot (n - 1) \cdot \dots \cdot (n - (k-1))}{n^{k (d-1)} \cdot (n-1/n) \cdot \dots \cdot (n - (k - 1)/n)} \\
	            &\geq \frac{(n - (k-1))^{k-1}}{n^{k-1}} = \bigg( 1 - \frac{k-1}{n} \bigg)^{k-1} \\
	            &\geq 1 - \frac{(k-1)^2}{n}.
	    \end{align*}
	    Therefore,
	    \begin{align*}
	        r'_S(T') - r'_S(T) &\leq 1 - \frac{|\Sn{n^d}^{+}|}{(n^d)!} \cdot (1 - \alpha) \\
	            &\leq 1 - \bigg( 1 - \frac{(k-1)^2}{n} \bigg)(1 - \alpha) \\
	            &= \alpha + \frac{(k-1)^2}{n} - \alpha \cdot \frac{(k-1)^2}{n} \\
	            &= \alpha \Big( 1 - \frac{(k-1)^2}{n} \Big) + \frac{(k-1)^2}{n}.
	    \end{align*}
    \qed \end{proof}
    
    Next we prove a simple lemma, which states that the extension of any top-cycle consistent rule $r$ remains top-cycle consistent. 
    
     \begin{lemma} \label{lemma__extension_preserve_top_cycle}
        If rule~$r$ for~$n$ teams is top-cycle consistent, then any extension to $n^d$ teams $r' = \textsc{Ext}(r, n^d)$ is top-cycle consistent as well.
    \end{lemma}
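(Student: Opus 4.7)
The plan is to reduce everything to a single structural lemma about top cycles: if $A$ is a non-empty subset of a tournament $U$ with the \emph{closure property} (i.e., no team in $U \setminus A$ beats a team in $A$), then $\mathcal{C}(U) \subseteq A$. This follows directly from the minimality in Definition~\ref{def:top-cycle}: the set $\mathcal{C}(U) \cap A$, if non-empty, is itself a closure-property subset of $\mathcal{C}(U)$ (edges into it from $\mathcal{C}(U) \setminus A \subseteq U \setminus A$ are forbidden by $A$'s closure property), and the strong connectedness of $\mathcal{C}(U)$ forces $\mathcal{C}(U) \cap A = \mathcal{C}(U)$; the case $\mathcal{C}(U) \cap A = \emptyset$ is ruled out by noting that $A \subseteq U \setminus \mathcal{C}(U)$ would then contradict $\mathcal{C}(U)$'s closure property in $U$ (every $A$-to-$\mathcal{C}(U)$ match, which must exist, would have to go from $A$ into $\mathcal{C}(U)$, but $A$'s closure property forces the reverse direction).

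Fix a tournament $T$ on $N$ and write $C := \mathcal{C}(T)$. I would first apply the lemma group-by-group. For any permutation $\pi$ drawn in step (i), consider a group $G_t^\pi$ with $G_t^\pi \cap C \neq \emptyset$. The closure property of $C$ in $T$ restricts to $T[G_t^\pi]$: no team in $G_t^\pi \setminus C$ beats any team in $G_t^\pi \cap C$. The lemma then yields $\mathcal{C}(T[G_t^\pi]) \subseteq G_t^\pi \cap C \subseteq C$. In particular, the finalist $w_t$, chosen by the \textsc{TopCycle} rule from $\mathcal{C}(T[G_t^\pi])$ in step (ii), lies in $C$.

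Next I would argue that the final set of finalists $W$ satisfies $W \cap C \neq \emptyset$. Since $C \neq \emptyset$ and the groups $G_1^\pi, \dots, G_n^\pi$ partition $N$, at least one group must intersect $C$; by the preceding paragraph, the corresponding finalist lies in $C$. Applying the key lemma one more time to the pair $W \cap C \subseteq W$ inside the tournament $T[W]$ — the closure property of $C$ in $T$ again restricts cleanly — we conclude $\mathcal{C}(T[W]) \subseteq W \cap C \subseteq C$. Since $r$ is top-cycle consistent on $W$, the winner produced in step (iii) lies in $\mathcal{C}(T[W]) \subseteq C$. As this holds for every outcome of the random choices, $r'_{C}(T) = 1$, i.e., $r'$ is top-cycle consistent.

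The only subtle point — and thus the main ``obstacle'' — is the structural lemma above: one must be confident that the top cycle of any sub-tournament is contained in any non-empty closure-property subset. Everything else is a routine unwinding of the three-step definition of $\textsc{Ext}(r, n^d)$, and no probabilistic accounting (of the kind needed in Theorem~\ref{thm__inc_num_of_teams_arbitrarily}) is required here since top-cycle consistency must hold with probability one for every realization of $\pi$ and of the intermediate draws.
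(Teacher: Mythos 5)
Your proof is correct and follows essentially the same route as the paper's: both rest on the observation that any non-empty subset inheriting the top cycle's closure property contains the sub-tournament's top cycle, apply it once per group to place each finalist from a $C$-intersecting group inside $C$, note $W \cap C \neq \emptyset$, and apply it once more to $T[W]$ before invoking the top-cycle consistency of $r$. Your structural lemma is a mild generalization of the paper's opening observation, but the argument is otherwise the same.
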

	    
    \begin{proof} [Proof of Lemma~\ref{lemma__extension_preserve_top_cycle}]
        Let~$T$ be a~tournament on a~set~$N$ of~$n^{d}$ teams.
        We observe that if~$T'$ is a~sub-tournament of~$T$ such that $V(T') \cap \mathcal{C}(T) \neq \emptyset$, then $\mathcal{C}(T') \subseteq \mathcal{C}(T)$.
        Indeed, since every team in $V(T') \setminus \mathcal{C}(T)$ is beaten by every team in $x \in V(T') \cap \mathcal{C}(T)$, the top-cycle~$\mathcal{C}(T')$ in~$T'$ is a~subset of $V(T') \cap \mathcal{C}(T) \subseteq \mathcal{C}(T)$.
	 
        To prove that~$r'$ is top-cycle consistent, we need to show that $r'_{x}(T) = 0$ for every $x \in N \setminus \mathcal{C}(T)$.
        We use the notation introduced in the description of the extension of rule~$r'$.
        It is enough to show that $\Delta^{\pi}_{W}(x) = 0$ for any permutation~$\pi$ on~$N$.
        Let $t \in [n]$ be such that $x \in G^{\pi}_{t}$. We consider two cases.
	    
        First, suppose that $G^{\pi}_{t} \cap \mathcal{C}(T) \neq \emptyset$. Let $T[G^{\pi}_{t}]$ be the sub-tournament of $T$ induced by $G^{\pi}_{t}$. From the observation follows that $\mathcal{C}(T[G^{\pi}_{t}]) \subseteq \mathcal{C}(T)$. 
        Since~\textsc{TopCycle} is top-cycle consistent, we have $w_t \in G^{\pi}_{t} \cap \mathcal{C}(T)$ (i.e., the winner of $T[G^{\pi}_{t}]$ is in $G^{\pi}_{t} \cap \mathcal{C}(T)$) with probability~1.
        Thus, $\Delta^{\pi}_{W}(x) = 0$.
	    
        Second, suppose that $G^{\pi}_{t} \cap \mathcal{C}(T) = \emptyset$. We claim that         $\Delta^{\pi}_{W}(w_t) = 0$ for any choice of~$w_t \in G^{\pi}_{t} \cap \mathcal{C}(T)$.
        Let $p \in[n]$ be such that $G_p^\pi \cap \mathcal{C}(T) \neq \emptyset$.
        Since~\textsc{TopCycle} is top-cycle consistent, we have that $w_p \in G_p^\pi \cap \mathcal{C}(T)$ with probability~1.
        Hence $\mathcal{C}(T) \cap W \neq \emptyset$ and $\mathcal{C}(T[W]) \subseteq \mathcal{C}(T)$ with probability~1.
        And since~$r$ is top-cycle consistent, the winner of $T[W]$ is in $\mathcal{C}(T)$ with probability~1. 
        Thus, $\Delta^{\pi}_{W}(w_t) = 0$ and so $\Delta^{\pi}_{W}(x) = 0$.
    \qed \end{proof}
	    
We are now ready to prove Theorem~\ref{thm:inc_num_of_teams_arbitrarily_top-cycle}.

\begin{proof}[Proof of Theorem~\ref{thm:inc_num_of_teams_arbitrarily_top-cycle}]
Suppose that there exists a~top-cycle consistent and \newline $\kSNMx{k}{\alpha}$ rule $r$ for $n$ teams. Let~$d$ be the smallest natural number $n' \leq n^d$. Given a tournament $T$ on a set $N'$ of $n'$ teams, extend the set of teams by adding a~set~$D$ of $n^d-n'$ many dummy teams (who lose to every real team, the outcomes inside of~$D$ are irrelevant) to obtain a tournament~$T'$ on $N' \cup D$ of~$n^d$ teams. Observe that this operation preserves the top-cycle, i.e., $\mathcal{C}(T) = \mathcal{C}(T')$. By Lemma~\ref{lemma__extension_preserve_top_cycle} and Theorem~\ref{thm__inc_num_of_teams_arbitrarily}, we know that~$\textsc{Ext}(r, n^d)$ is top-cycle consistent and $\kSNMx{k}{\alpha'}$ for some \[\alpha' \leq \alpha \Big( 1 - \frac{(k-1)^2}{n} \Big) + \frac{(k-1)^2}{n}.\]
Finally, the rule~$r'$ for $T$ is defined as follows: $r'_x(T) := \textsc{Ext}(r,n^d)_x(T')$ for every $x \in N'$. The rule~$r'$ is well defined because $\mathcal{C}(T) = \mathcal{C}(T')$ and~$\textsc{Ext}(r,n^d)$ is top-cycle consistent. Moreover, observe that~$r'$ is also top-cycle consistent and $\kSNMx{k}{\alpha'}$.

    \qed \end{proof}

\end{document}